\DeclareMathOperator*{\argmax}{arg\,max}
\DeclareMathOperator*{\argmin}{arg\,min}
\newcommand{\pot}{\operatorname{Pot}}
\newcommand{\supp}{\operatorname{supp}}
\newcommand{\ft}{\widetilde{f}}
\newcommand{\vx}{\vec{x}}
\newcommand{\vxe}{\vec{x}^*}
\newcommand{\vl}{\vec{\ell}}
\newcommand{\va}{\vec{a}}
\newcommand{\vb}{\vec{b}}
\newcommand{\vy}{\vec{y}}
\newcommand{\vz}{\vec{z}}
\newcommand{\vs}{\vec{s}}
\newcommand{\ve}{\vec{e}}
\newcommand{\vm}{\vec{m}}
\newcommand{\vM}{\vec{M}}
\newcommand{\gm}{\gamma_m}
\newcommand{\vlk}[1]{\vl^{(#1)}}
\newcommand{\ellk}[1]{\ell^{(#1)}}
\newcommand{\mk}[1]{M^{(#1)}}
\newcommand\bigo[1]{\mathcal{O}\!\left( #1\right) }
\newcommand\bigom[1]{\Omega\!\left( #1\right)}
\newcommand\mps[2]{\left<\,#1\,|\,#2\,\right>}
\newcommand\floor[1]{\left\lfloor #1 \right\rfloor}
\newcommand\ceil[1]{\left\lceil #1 \right\rceil}
\newcounter{fabrice}
\newcounter{corinne}
\newcommand{\normu}[1]{{\left\lVert#1\right\rVert}_\infty}
\newtheorem{theorem}{Theorem}
\newtheorem{lemma}[theorem]{Lemma}
\newtheorem{definition}[theorem]{Definition}
\newtheorem{proposition}[theorem]{Proposition}
\newif\iffullarticle
\begin{document}
\title{The Social Medium Selection Game }
\date{}

\author{
\IEEEauthorblockN{Fabrice Lebeau\IEEEauthorrefmark{1}\IEEEauthorrefmark{2}, Corinne Touati\IEEEauthorrefmark{1}\IEEEauthorrefmark{3}, Eitan Altman\IEEEauthorrefmark{1} and Nof Abuzainab\IEEEauthorrefmark{1}}
\IEEEauthorblockA{\IEEEauthorrefmark{1}Inria \quad
\IEEEauthorrefmark{3} CNRS, LIG, Univ. Grenoble Alpes
\quad \IEEEauthorrefmark{2}ENS Lyon
\\
Email: fabrice.lebeau@ens-lyon.fr, \{corinne.touati, eitan.altman, nof.abuzainab\}@inria.fr}}
\maketitle

\IEEEpeerreviewmaketitle

\begin{abstract}
We consider in this paper competition of content creators in routing their content through various media. The routing decisions may correspond to the selection of a social network (e.g. twitter versus facebook or linkedin) or of a group within a given social network. The utility for a player to send its content to some medium is given as the difference between the dissemination utility at this medium and some transmission cost. We model this game as a congestion game and compute the pure potential of the game. In contrast to the continuous case, we show that there may be various equilibria. We show that the potential is M-concave which allows us to characterize the equilibria and to propose an algorithm for computing it. We then give a learning mechanism which allow us to give an efficient algorithm to determine an equilibrium. We finally determine the asymptotic form of the equilibrium and discuss the implications on the social medium selection problem.
\end{abstract}

\section{Introduction}
Social networks involve many actors who compete over many resources. This gives rise to competitions at different levels which need to be taken into account in order to explain and predict the system behavior. In this paper, we focus on competition  of individual content creators over media. A content creator has to decide which one of several media to use. The media choice may correspond to a social network that will be used for sending (and disseminating) some content. For instance, the decision can consist in choosing between twitter and facebook, or in deciding to which of several facebook groups to send the content. 

The game we study in this paper is atomic and non-splitable. We consider a decision maker (or a player) to be a single content instead. This regime can well approximate decision making where a content creator, say a blogger, occasionally sends content. Here, occasionally implies that the time intervals between generation of consecutive contents by the blogger is large enough so that the states of the system at the different times of creation of content are independent one from another. This regime is interesting not only because it is characteristic of systems with many sources of contents, but also and foremost, it turns out that it precisely characterizes bloggers that have more popularity and influence. This was established experimentally in \cite{mckl} which analyses the role of intermediate actors in dissemination of content. 

A similar game as the one in this paper was already studied in \cite{EitanDynamicModel} and \cite{RAH}, but there the players
control rates of creation of contents and/or decide how to split the rates. The resulting games are simpler than
ours as they possess a single equilibrium. The model studied in this paper brings many novelties both in the system behavior as well as in the tools used to study it. The difficulty in studying the game in our atomic non-splitable game framework is due to the integrity constraint on the players as they cannot split their content between several media. This implies that the action space is discrete and thus non convex which
may result in problems in the existence and/or uniqueness of the equilibrium.

\paragraph*{Related work}
Game theoretic models for competition have been proposed in a growing number of references.
 The authors of \cite{hmv} focus on the competition over budget of attention of content 
consumers and the impact of this competition on the dynamic popularity of the content.
A game model related to
intermediate actors that participate in the spreading of news
is considered in \cite{mckl}. The authors study how to choose the
type  and amount of content to send so as to be influential. In \cite{LPT}, the authors study
competition over space among content creators. The space may represent a slot (say the top one)
in  a timeline, and a content that arrives occupies the space pushing out the one that is already
there. It then stays visible there till the 
next arrival of content that pushes it away.
The authors study the timing game: when should a content be sent to the timeline so as
to maximize the expected time it remain visible. The authors of \cite{MP} study a dynamic competition model
over visibility in which the rate at which creator of contents send their traffic is controlled.

Game theory has been used not only to model competition in social networks but also to design algorithms for the analysis of social networks \cite{NN11}: this includes community detection \cite{NN12}, discovery of influential nodes \cite{NN11b} and more.

\paragraph*{Contributions of our paper}
Our first contribution is to make the observation that the game complies with the definition of 
congestion game introduced by Rosenthal \cite{rsntl}. This allows us to show that our game is a potential game, for which Nash equilibria exist. This further implies that algorithms based on best response converge to an equilibrium. We show that
surprisingly, although the potential approximates a strictly concave function (in the continuous space), there may exist many equilibria. This is quite a new
phenomenon in networking games, and it is due to the non convexity of the action space (due to the
non splitable assumption). In order to have uniqueness, a new concept of integer concavity have
to be used. We rely on the theory of M-concavity~\cite{MurotaMConvex} which allows us to establish the structure of the set of equilibria for this problem. We propose a learning algorithm that converges to an equilibrium and is more efficient than the best response algorithm. We finally study the asymptotic behavior of the system as the number of players grows, both in terms of characterization of the equilibria and in price of anarchy.


\section{Model and Notations}

We consider a set $\mathbb{K} = \{1,...,K\}$ of seeds (content producers,
bloggers, etc) that aim to publish their content in social media. We focus on the problem where each seed needs to publish in some social medium $j \in \mathbb{J} =  \{1,...,J\}$. 
The strategy $s_k$ of seed $k$ is the social medium it selects for
disseminating its content. 
Define the load $\ell_j$ on social medium $j$ as the
number of competing seeds that send their content to medium $j$.
It can be written as
$ \ell_j = \sum_k \delta(s_k, j)$ ($\delta$ is the Kronecker symbol, i.e. $\delta(a,b)$ equals $1$ if $a=b$ and $0$ otherwise). 
Assume that social medium $j$ has $N_j > 0$ subscribers who are interested
in content shipped to that medium. 

The utility of a player (seed) is given as the
difference between a \emph{dissemination utility} and a \emph{dissemination cost}. 
The former, that is, the value of
disseminating a content at the $j$th social network is (i) proportional
to $N_j$ and (ii) inversely proportional to $\ell_j$.
Further, each seed pays a constant dissemination cost $\gamma_j$ for publishing on 
social medium $j$.

This structure of utility is very common. In the networking community,
we find it in resource sharing of link capacity for flow control
problems. It is also associated with the so-called
Kelly mechanism (see~\cite{MA} for a similar utility in cloud computing),
and models tracing back to the Tullock rent-seeking
 problem \cite{tullok}.
In the social medium context, this utility naturally arises if
different seeds create
similar content (say news) and thus a subscriber is not interested
in receiving more than one content. This implies the structure of point (ii).
The use of this type of utility in competition for resources in social 
networks can be found in \cite{EitanDynamicModel}.


Hence, the utility of seed $k$ is given by
\begin{equation}
u_k(\vec{s}) = \frac{ N_{s_k}} { \ell_{s_k} } - \gamma_{s_k},
\label{eq:utility}
\end{equation}
where $\vec{s}$ is the vector of strategies of the seeds $\vec{s} = (s_k)_{k \in \mathbb{K} }$

For notational convenience, in the following we will denote by $S = \mathbb{J}^{\mathbb{K}}$ the set of strategy profiles, and by $\Gamma=(K,(N_j,\gamma_j)_{j\in \mathbb{J}})$ the game setting with $K$ players (the seeds), the set $\mathbb{J}$ of social media of parameters $((N_j,\gamma_j)_{j\in \mathbb{J}})$ and the utility $(u_k)_{k \in \mathbb{K}}$ defined in Eq.~\eqref{eq:utility}.


Since the utility of each player (the seeds) only depends on the number of users choosing the same action (i.e. the same social media), the game is 
equivalent to a congestion game in the sense of Rosenthal \cite{rsntl}, where the resources are the social media. It therefore is a potential game~\cite{MondererShapley}, that is to say that there exists a function $\pot$ such that:
$$
\begin{array}{l}
\forall k \in \mathbb{K}, \forall \vec{s} \in S, \forall s_k, \\
\quad u_k(s_1,...,s_{k-1},s_k',s_{k+1},...,s_{K}) - u_k(\vec{s}) = \\
\qquad \pot (s_1,...,s_{k-1},s_k',s_{k+1},...,s_{K}) - \pot (\vec{s}).
\end{array}
$$

Let us introduce the Harmonic number:
$H_n = \sum_{j=1}^{n} 1/j$ if $n \geq 1$ and $H_n = 0$ if $n \leq 0$. 
Then, one can readily check that a suitable potential of the game is:
$$ \pot (\vs) = \sum_{j} \left( N_j H_{\ell_j} - \gamma_j \ell_j \right). $$

Let $\vl$ be the vector of loads induced by $\vs$ and $\mathcal{D}$ the set of possible vector loads: $\mathcal{D}=\left\{ (\ell_1,...,\ell_{J}) \in \mathbb{N}^{J}\ |\ \sum_{j=1}^{J} \ell_j = K \right\}$. By using the equivalence with
the congestion game of \cite{rsntl} we get
the potential as a function of 
the loads $\vl \in \mathcal{D}$ of the social media:
$$ \pot (\vl) = \sum_{j} \left( N_j H_{\ell_j} - \gamma_j \ell_j \right). $$


Let us finally introduce the following notations that will come in handy in the rest of the paper:
\begin{itemize}
\item For $\vx \in \mathbb{R}^n$, $\supp^+(\vx) = \{j\ |\ x_j > 0\}$ ,
\item $\mps{.}{.}$ is the euclidian scalar product over $\mathbb{R}^{J}$: $\mps{\vx}{\vy} = \sum_j x_j y_j$,
\item $\normu{.}$ is the uniform norm: $\normu{\vx} = \max_j |x_j|$,
\item $(\ve_j)_j$ is the Euclidean base of $\mathbb{R}^J$.
\end{itemize}

\section{Discrete Potential Analysis}

\subsection{Nash equilibria}
Potential games have received a lot of attention in the past years as they draw a natural bridge between the theory of games and optimization.
Indeed, the definition of a potential implies the following. 
A strategy profile $\vs$ is a Nash equilibrium iff it is a maximizer of the potential, that is, 
\begin{equation} \label{eq:eqS}
\begin{aligned}
 &\forall \vec{s}, \forall k \in \mathbb{K}\ \forall s_k' \in \mathbb{J},\\
			   \pot(&s_1,...,s_{k-1},s_k',s_{k+1},...,s_{K}) \leq \pot(\vs).
\end{aligned}
\end{equation}

Note that a change of strategy of a seeds from social media $i$ to social media $j$ amount in reducing the load $\ell_i$ of network $i$ and increase that of $j$ by one unit. Hence, in the space of load vectors, Equation~\eqref{eq:eqS} becomes:
\begin{equation} \label{eq:eqL}
\forall i,j: \ell_j > 0 \Rightarrow \pot(\vl+\ve_i-\ve_j) \leq \pot(\vl).
\end{equation}

For a given load vector $\vec{\ell}$, let $\mathcal{V}(\vec{\ell})$ be the set of possible load vectors obtained after the deviation of a single player:
$\phantom{\bigg\{} \hspace{-7pt}
\mathcal{V}(\vec{\ell}) = \left\{ \vl+\ve_i-\ve_j\ |\ i,j \in \mathbb{J} \text{ st } \ell_j>0\right\}$. Then, the Nash equilibria are all the strategy profiles $\vs$ for which the vector of loads is a local (in the sense of $\mathcal{V}$) maximum of $\pot:\mathcal{D} \rightarrow \mathbb{R}$. 

\subsection{M-concavity}

The potential is defined over a discrete set, and therefore classical convexity properties do not hold. In order to understand the structural and uniqueness properties of the Nash equilibrium, we study the properties of the potential function in terms of M-concavity\footnote{For more information about M-convexity, see~\cite[sec. 4.2]{MurotaMConvex}.}.

\begin{definition}
A function $f:\mathbb{Z}^J \rightarrow \mathbb{R}$ is M-concave if for all $\vx,\vy$ in $\mathcal{D}$ and for all $u \in \supp^+(\vx-\vy)$:
\begin{align*}
 &\exists v \in \supp^+(\vy-\vx),\\
 f(\vx) &+ f(\vy) \leq f(\vx-\ve_u+\ve_v) + f(\vy-\ve_v+\ve_u).
\end{align*}
\end{definition}

We have the fundamental property:
\begin{theorem}
The function $f:\mathbb{Z}^{J} \rightarrow \mathbb{R}$ defined by: 
$$ f(\vx) = \begin{cases}
                 \pot(\vx) &\text{ if } \vx \in \mathcal{D}, \\
                 -\infty &\text{ otherwise} 
                \end{cases} $$
is M-concave.  
\end{theorem}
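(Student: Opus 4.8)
The plan is to exploit the fact that the potential is \emph{separable}: writing $g_j(t) = N_j H_t - \gamma_j t$ for $t \in \mathbb{N}$, we have $\pot(\vx) = \sum_j g_j(x_j)$ on $\mathcal{D}$. The single ingredient needed from each coordinate is its discrete concavity, i.e. that the forward difference
\[ g_j(t+1) - g_j(t) = \frac{N_j}{t+1} - \gamma_j \]
is non-increasing in $t \geq 0$, which is immediate since $1/(t+1)$ decreases. First I would record this monotonicity of the marginals, as it is the only analytic input to the whole argument.

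Next I would set up the exchange. Fix $\vx, \vy \in \mathcal{D}$ and $u \in \supp^+(\vx - \vy)$, so that $x_u \geq y_u + 1 \geq 1$. Because $\vx, \vy \in \mathcal{D}$ share the same coordinate sum $K$, we have $\sum_j (x_j - y_j) = 0$; since this sum has a strictly positive term (at $u$), $\supp^+(\vy - \vx)$ is non-empty, and I would pick any $v$ in it, so that $y_v \geq x_v + 1 \geq 1$. I would then check feasibility of the two exchanged vectors: decrementing coordinate $u$ of $\vx$ is legal because $x_u \geq 1$, decrementing coordinate $v$ of $\vy$ is legal because $y_v \geq 1$, and both exchanges preserve the total $K$; hence $\vx - \ve_u + \ve_v$ and $\vy - \ve_v + \ve_u$ both lie in $\mathcal{D}$, so $f$ equals $\pot$ (and is finite) at all four points, and the $-\infty$ branch never interferes.

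With feasibility secured, separability reduces the claimed inequality to the non-negativity of two independent contributions, one involving coordinate $u$ and one involving coordinate $v$:
\[ \bigl[g_u(x_u-1) - g_u(x_u)\bigr] + \bigl[g_u(y_u+1) - g_u(y_u)\bigr] \;\geq\; 0, \]
\[ \bigl[g_v(x_v+1) - g_v(x_v)\bigr] + \bigl[g_v(y_v-1) - g_v(y_v)\bigr] \;\geq\; 0. \]
Each bracket is a forward difference of some $g_j$, and I would rewrite both lines as a comparison of two marginals: the first becomes $(g_u(y_u+1)-g_u(y_u)) - (g_u(x_u)-g_u(x_u-1))$, which is $\geq 0$ because $x_u - 1 \geq y_u$ and the marginals of $g_u$ are non-increasing; symmetrically the second is $\geq 0$ because $y_v - 1 \geq x_v$. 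Summing the two lines and recombining yields exactly $f(\vx) + f(\vy) \leq f(\vx - \ve_u + \ve_v) + f(\vy - \ve_v + \ve_u)$.

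I do not expect a genuine obstacle here: the delicate part of M-concavity in general is the existence of a \emph{good} exchange partner $v$, but separability trivialises this, since every $v \in \supp^+(\vy - \vx)$ works and the two coordinate contributions decouple completely. The only points requiring care are bookkeeping ones, namely verifying that the exchanged vectors stay non-negative (hence in $\mathcal{D}$, avoiding the $-\infty$ value) and that $\supp^+(\vy-\vx)$ is non-empty, both of which follow immediately from $u \in \supp^+(\vx-\vy)$ together with the common coordinate sum $K$.
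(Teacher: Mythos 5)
Your proof is correct and takes essentially the same route as the paper's: the same counting argument over the common sum $K$ yields $\supp^+(\vy-\vx)\neq\emptyset$, the same feasibility check keeps all four points in $\mathcal{D}$, and your non-increasing-marginals formulation is just a repackaging of the paper's direct computation $N_v\left(\frac{1}{x_v+1}-\frac{1}{y_v}\right)+N_u\left(\frac{1}{y_u+1}-\frac{1}{x_u}\right)\geq 0$. Your closing observation that \emph{every} $v\in\supp^+(\vy-\vx)$ works is likewise made in the paper immediately after its proof, where it is recorded as the strengthened inequality~\eqref{eq:mconvex} and exploited in the later analysis.
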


\begin{proof}
Let $\vx,\vy \in \mathcal{D}$. First, if $\supp^+(\vx-\vy) = \emptyset$, then the property is trivially true.
Otherwise, assume that there is some $i$ in $\supp^+(\vx-\vy)$. If we had $\supp^+(\vy-\vx) = \emptyset$ then we would also have
$$ \sum_j y_j = y_i + \sum_{j \neq i} y_j < x_i + \sum_{j \neq i} x_j = \sum_j x_j = K. $$
This is absurd since $\vy \in \mathcal{D}$. Hence $\supp^+(\vy-\vx) \neq \emptyset$.

Then, let $u \in \supp^+(\vx-\vy)$ and $v \in \supp^+(\vy-\vx)$. Since $x_u > y_u \geq 0$ and $y_v > x_v \geq 0$ we have that $\vx-\ve_u+\ve_v$ and $\vy-\ve_v+\ve_u$ are in $\mathcal{D}$. Then:
\begin{align*}
 f(&\vx-\ve_u+\ve_v)+f(\vy-\ve_v+\ve_u)-f(\vx)-f(\vy) \\
&= N_v\underbrace{\left( \frac{1}{x_v+1}-\frac{1}{y_v}\right)}_{\geq 0} + N_u\underbrace{\left(\frac{1}{y_u+1}-\frac{1}{x_u}\right)}_{\geq 0}. \hspace{0.5em} \qedhere
\end{align*}
\end{proof}

Note that since we did not choose a particular $v$ in the preceding proof, we have actually shown a much stronger property, that is, that the inequality holds \emph{for any $v$}, which is decisive for the 
rest of the analysis:
\begin{equation}
 \label{eq:mconvex}
 \begin{aligned}
  &\forall \vx,\vy \in \mathcal{D},\ \\
&\forall u \in \supp^+(\vx-\vy),\ \mathbb{\boldsymbol{\pmb{\forall}}} v \in \supp^+(\vy-\vx),\: \\
  &f(\vx) + f(\vy) \leq f(\vx-\ve_u+\ve_v) + f(\vy-\ve_v+\ve_u).
 \end{aligned}
\end{equation}

\subsection{Properties of the Nash equilibria}
In this section, we show properties of the Nash equilibria of this game using the M-concavity of $f$.

\begin{theorem}
\label{thm:globMax}
 Let $\vs \in S$ and let $\vl$ be the vector of loads of $\vs$. Then: 
 \begin{center}
  $\vs$ is a Nash equilibrium for the game \\$\Leftrightarrow$ $\vl$ maximizes {\bf globally} the potential over $\mathcal{D}$. 
 \end{center}
\end{theorem}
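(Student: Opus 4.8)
The plan is to prove the two implications separately, the forward one being immediate and the backward one being where M-concavity does the work. Recall from Eq.~\eqref{eq:eqL} that $\vs$ being a Nash equilibrium is exactly the statement that $\vl$ is a \emph{local} maximizer of $\pot$ with respect to the single-deviation neighborhood $\mathcal{V}(\vl)$. The whole content of the theorem is therefore that, for this particular potential, local optimality forces global optimality, which is the hallmark property of M-concave functions.

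The direction ``global $\Rightarrow$ Nash'' is trivial: if $\vl$ maximizes $\pot$ over all of $\mathcal{D}$, then in particular $\pot(\vl + \ve_i - \ve_j) \leq \pot(\vl)$ for every admissible single deviation, since each such point again lies in $\mathcal{D}$; this is precisely Eq.~\eqref{eq:eqL}.

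For ``Nash $\Rightarrow$ global'' I would argue by contradiction using a minimal-distance extremal point. Suppose $\vl$ is a local maximizer but not a global one, and among all global maximizers of $\pot$ over $\mathcal{D}$ pick a $\vy$ minimizing the $\ell^1$-distance $\|\vy - \vl\|_1 = \sum_j |y_j - \ell_j|$. Since $\vl$ is not global, $\vy \neq \vl$; and because $\vy,\vl \in \mathcal{D}$ share the coordinate sum $K$, the same counting argument used in the proof of M-concavity shows that both $\supp^+(\vl - \vy)$ and $\supp^+(\vy - \vl)$ are nonempty. Fix $u \in \supp^+(\vl - \vy)$; note $\ell_u > y_u \geq 0$, so $\ell_u > 0$ and $\vl - \ve_u + \ve_v$ is a legitimate single deviation. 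Letting $v \in \supp^+(\vy - \vl)$ be the index provided by M-concavity and applying Eq.~\eqref{eq:mconvex} with $\vx = \vl$ gives $f(\vl) + f(\vy) \leq f(\vl - \ve_u + \ve_v) + f(\vy - \ve_v + \ve_u)$.

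It then remains to close the loop. Local maximality of $\vl$ yields $f(\vl - \ve_u + \ve_v) \leq f(\vl)$, and substituting this into the M-concavity inequality leaves $f(\vy) \leq f(\vy - \ve_v + \ve_u)$. But $\vy$ is a global maximizer, so this is forced to be an equality, whence $\vy' := \vy - \ve_v + \ve_u$ is itself a global maximizer in $\mathcal{D}$. A direct check of the two changed coordinates (coordinate $v$, where $y_v - \ell_v > 0$, drops by one; coordinate $u$, where $y_u - \ell_u < 0$, rises by one) shows $\|\vy' - \vl\|_1 = \|\vy - \vl\|_1 - 2$, contradicting the minimality of $\vy$. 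The main obstacle, and the only genuinely nontrivial step, is recognizing that M-concavity should be invoked to pair a single-deviation move \emph{away} from $\vl$ with a matching move \emph{toward} $\vl$ from the optimum, so that local optimality can be traded against a strict decrease in distance; the remaining bookkeeping on supports and on the $\ell^1$-distance is routine.
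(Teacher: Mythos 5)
Your proof is correct, and it takes a genuinely different (more self-contained) route than the paper. The skeleton is the same in both: the direction ``global $\Rightarrow$ Nash'' is immediate from Eq.~\eqref{eq:eqL}, and the substance lies in showing that single-deviation local maximality of $\vl$ forces global maximality. But the paper disposes of that step by verifying the hypothesis $f(\vl) \geq f(\vl - \ve_u + \ve_v)$ for \emph{all} pairs $u,v$ (using $f = -\infty$ off $\mathcal{D}$ to absorb inadmissible exchanges) and then invoking \cite[Thm 4.6]{MurotaMConvex} as a black box, whereas you re-prove that local-to-global criterion from scratch: your minimal-$\ell^1$-distance argument --- pick a nearest global maximizer $\vy$, pair a deviation away from $\vl$ at $u \in \supp^+(\vl-\vy)$ with the matching exchange toward $\vl$ at $v \in \supp^+(\vy-\vl)$, use local maximality to cancel $f(\vl - \ve_u + \ve_v)$ against $f(\vl)$ in the exchange inequality, and conclude that $\vy - \ve_v + \ve_u$ is a global maximizer two units closer --- is essentially the standard proof of Murota's theorem, specialized to this potential. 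All the side conditions are correctly checked: $\ell_u > y_u \geq 0$ makes $\vl - \ve_u + \ve_v$ an admissible member of $\mathcal{V}(\vl)$, $y_v > \ell_v \geq 0$ keeps $\vy - \ve_v + \ve_u$ in $\mathcal{D}$, and the two changed coordinates each shrink the distance by one (note the degenerate case $\|\vy - \vl\|_1 = 2$, where $\vy' = \vl$, still contradicts your minimal choice of $\vy$, since $\vy'$ is then a global maximizer at distance $0$). What the paper's route buys is brevity and an explicit treatment of exchanges that leave $\mathcal{D}$; what yours buys is that the reader needs no external reference --- only the exchange property \eqref{eq:mconvex} already established in the text --- and, as a bonus, your argument only uses the existential form of M-concavity, not the strengthened ``for all $v$'' version the paper highlights for Theorem~\ref{thm:eqLocal}.
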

\begin{proof}
 The sufficient condition is a direct consequence of Eq.~\eqref{eq:eqL}.
Conversely, assume that $\vs$ is a Nash equilibrium. Then, by~\eqref{eq:eqL}, we know that $\vl$ is a local maximum of $\pot$ (over $\mathcal{V}(\vl)$). Let $u,v \in \mathbb{J}$:\\
 If $\vl - \ve_u + \ve_v \notin \mathcal{D}$ then $f(\vl) > f(\vl - \ve_u + \ve_v)$. Otherwise, we have $\vl - \ve_u + \ve_v \in \mathcal{V}(\vl)$. Hence $\vl$ satisfies the property:
 $ \displaystyle \forall u,v \in \mathbb{J},\ f(\vl) \geq f(\vl - \ve_u + \ve_v). $
 
 We then apply~\cite[Thm 4.6]{MurotaMConvex} with the M-concave function $f$ on $\vl$, given that $\vl$ is a global maximum of $f$. Therefore $\vl$ is a global maximum of $\pot$ on $\mathcal{D}$.
\end{proof}

We show next that the set of loads corresponding to the different Nash 
equilibria of the game, are all neighbors of each other:


\begin{theorem}
\label{thm:eqLocal}
 Let $\mathcal{E}_\Gamma$ be the set of the loads of the Nash equilibria of the game. Then: 
  $$ \forall \vx,\vy \in \mathcal{E}_\Gamma,\ \vx-\vy = \sum_{u \in \supp^+(\vx-\vy)} \ve_u - \sum_{v \in \supp^+(\vy-\vx)} \ve_v. $$
In other words, all Nash equilibria $\vx,\vy \in \mathcal{E}_\Gamma$ satisfy $\normu{\vec{x}-\vec{y}} \leq 1$.
\end{theorem}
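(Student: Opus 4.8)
The plan is to combine the global-maximality characterization from Theorem~\ref{thm:globMax} with the strengthened M-concavity inequality~\eqref{eq:mconvex}, exploiting the fact that the two nonnegative summands appearing in the proof of M-concavity are strictly positive unless the corresponding coordinate gap is exactly one. First I would dispose of the trivial case $\vx=\vy$ and, assuming $\vx\neq\vy$, recall (exactly as in the M-concavity proof) that both $\supp^+(\vx-\vy)$ and $\supp^+(\vy-\vx)$ are nonempty, since $\vx$ and $\vy$ share the same coordinate sum $K$.

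Next, by Theorem~\ref{thm:globMax} both $\vx$ and $\vy$ globally maximize $\pot$ over $\mathcal{D}$, so $f(\vx)=f(\vy)=M$ where $M=\max_{\mathcal{D}} f$. Fix any $u\in\supp^+(\vx-\vy)$ and $v\in\supp^+(\vy-\vx)$; then $\vx-\ve_u+\ve_v$ and $\vy-\ve_v+\ve_u$ both lie in $\mathcal{D}$, so each has $f$-value at most $M$. Feeding this into~\eqref{eq:mconvex} yields
$$2M = f(\vx)+f(\vy) \leq f(\vx-\ve_u+\ve_v)+f(\vy-\ve_v+\ve_u) \leq 2M,$$
forcing every inequality to be an equality. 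In particular, the difference $f(\vx-\ve_u+\ve_v)+f(\vy-\ve_v+\ve_u)-f(\vx)-f(\vy)$ is exactly zero.

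The decisive step is to invoke the explicit computation from the proof of the M-concavity theorem, which expresses this same difference as $N_v\left(\frac{1}{x_v+1}-\frac{1}{y_v}\right)+N_u\left(\frac{1}{y_u+1}-\frac{1}{x_u}\right)$, a sum of two nonnegative terms. Since the sum vanishes and $N_u,N_v>0$, each term must vanish individually; because $t\mapsto 1/t$ is strictly decreasing, $\frac{1}{x_v+1}=\frac{1}{y_v}$ gives $y_v-x_v=1$ and $\frac{1}{y_u+1}=\frac{1}{x_u}$ gives $x_u-y_u=1$. Letting $u$ range over $\supp^+(\vx-\vy)$ (with $v$ fixed) and $v$ over $\supp^+(\vy-\vx)$ (with $u$ fixed), this shows that every positive coordinate of $\vx-\vy$ equals $1$ and every positive coordinate of $\vy-\vx$ equals $1$; since the two vectors agree on all remaining coordinates, the claimed identity follows, and $\normu{\vx-\vy}\leq 1$ is immediate.

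I expect the only real subtlety to lie in the last paragraph: recognizing that global maximality of \emph{both} endpoints turns the M-concavity inequality into an equality and thereby collapses each nonnegative summand to zero. This strict-positivity argument—rather than M-concavity alone—is what rules out the large maximizer sets that generic M-concave functions may possess, and it relies crucially on the hypothesis $N_j>0$ together with the strict monotonicity of the harmonic increments $1/n$.
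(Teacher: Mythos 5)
Your proof is correct and follows essentially the same route as the paper's: both combine the global-maximality characterization of Theorem~\ref{thm:globMax} with the strengthened inequality~\eqref{eq:mconvex}, force equality, and then use the explicit expression $N_v\bigl(\frac{1}{x_v+1}-\frac{1}{y_v}\bigr)+N_u\bigl(\frac{1}{y_u+1}-\frac{1}{x_u}\bigr)$ together with strict monotonicity of $t\mapsto 1/t$ to conclude $x_u=y_u+1$ and $y_v=x_v+1$. The only cosmetic difference is that you quantify explicitly over all pairs $(u,v)$ in the two supports, whereas the paper extracts a representative pair via the decomposition $\vx-\vy=\ve_u-\ve_v+\vz$; your version makes the final step slightly more transparent.
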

\begin{proof}
 First, if $\mathcal{E}_\Gamma = \{\vx\}$, then the theorem is trivially true.
 Otherwise, assume that there exists $\vx,\vy \in \mathcal{E}_\Gamma$ such that $\normu{\vx-\vy} > 0$. 
Since $\vx$ and $\vy$ are in $\mathcal{D}$, we can write $\vx-\vy$ as
 $ \vx - \vy = \ve_u - \ve_v + \vz$
  for some $u,v \in \mathbb{J}$ and $\vz \in \mathbb{Z}^{J}$ satisfying  $u \neq v$, $\mps{z}{\ve_u} \geq 0$ and $\mps{z}{\ve_v} \leq 0$.
 Therefore, we have $u \in \supp^+(\vx-\vy)$ and $v \in \supp^+(\vy-\vx)$. 
 
 Let $\va = \vx - \ve_u + \ve_v$ and $\vb = \vy - \ve_v + \ve_u$. By Eq.~\eqref{eq:mconvex} we have:
$ f(\vx) + f(\vy) = 2f(\vx) \leq f(\va) + f(\vb). $
 Hence we get $f(\va) = f(\vb) = f(\vx)$ by global maximality of $f(\vx)$. 
Then
$f(\vx)-f(\va)+f(\vy)-f(\vb) = 0 $, 
which in turns implies that $$\frac{N_u}{x_u}+\frac{N_v}{y_v} = \frac{N_u}{y_u+1}+\frac{N_v}{x_v+1}.$$
  Since $x_u \geq y_u+1$ and $y_v \geq x_v+1$, the last equation implies that $x_u = y_u+1$ and $x_v = y_v-1$. Therefore:
  \[ \vx - \vy = \sum_{u \in \supp^+(\vx-\vy)} \ve_u - \sum_{v \in \supp^+(\vy-\vx)} \ve_v. \qedhere\]
\end{proof}

Using this result, we can find a bound over the number of Nash equilibria:

\begin{proposition}
\label{thm:eqBound}
For any setting $\Gamma$, the number of Nash equilibria is upper bounded. More precisely, let $\mathcal{E}_\Gamma$ be the set of the loads of the Nash equilibria of $\Gamma$. Then:
 $$ |\mathcal{E}_\Gamma| \leq \binom{J}{\floor{\frac{J}{2}}}.$$

Further, this bound is tight. Indeed, let $J \geq 2$, $m \in \mathbb{N}^*$ and $\gamma \in \mathbb{R}^+$. We define the game $\Gamma$ by
$ K = \floor{\frac{J}{2}}$ and $\forall j \in \mathbb{J}, N_j = m,\ \gamma_j = \gamma. $ Then $|\mathcal{E}_\Gamma| = \binom{J}{\floor{\frac{J}{2}}}.$
\end{proposition}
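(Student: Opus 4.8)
The plan is to handle the two claims separately: deriving the upper bound from Theorem~\ref{thm:eqLocal}, and then establishing tightness by computing explicitly the equilibria of the symmetric game.

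For the upper bound, the key is that Theorem~\ref{thm:eqLocal} confines all equilibria to an $\ell_\infty$-ball of radius $1$. First I would note that $\mathcal{E}_\Gamma$ is nonempty (a finite potential game always admits an equilibrium) and set $\vec{c}$ to be its componentwise minimum, $c_j = \min_{\vx \in \mathcal{E}_\Gamma} x_j$. Picking, for each $j$, an equilibrium attaining $c_j$ and invoking $\normu{\vx-\vy}\le 1$, every coordinate of every equilibrium lies in $\{c_j,c_j+1\}$. Hence each $\vx\in\mathcal{E}_\Gamma$ is written uniquely as $\vec{c}+\sum_{j\in A}\ve_j$ with $A=\supp^+(\vx-\vec{c})\subseteq\mathbb{J}$, and the assignment $\vx\mapsto A$ is injective. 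Since every $\vx$ satisfies $\sum_j x_j=K$ and $\sum_j c_j$ does not depend on $\vx$, the cardinality $|A|=K-\sum_j c_j=:p$ is the \emph{same} for all equilibria. Thus $\mathcal{E}_\Gamma$ embeds into the set of $p$-subsets of $\mathbb{J}$, giving $|\mathcal{E}_\Gamma|\le\binom{J}{p}\le\binom{J}{\floor{J/2}}$, the last inequality because the binomial coefficient is maximized at the central index.

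For tightness, I would exploit the symmetry $N_j=m$, $\gamma_j=\gamma$. On $\mathcal{D}$ the cost term is constant, $\sum_j\gamma_j\ell_j=\gamma K$, so maximizing $\pot$ reduces to maximizing $\sum_j H_{\ell_j}$ subject to $\sum_j\ell_j=K$. Writing $H_{\ell_j}=\sum_{i=1}^{\ell_j}1/i$ and comparing termwise with $\sum_{i=1}^{\ell_j}1$ yields $\sum_j H_{\ell_j}\le\sum_j\ell_j=K$, with equality exactly when every $\ell_j\in\{0,1\}$. Combined with $\sum_j\ell_j=K=\floor{J/2}$, the global maximizers are precisely the $0/1$ vectors with exactly $\floor{J/2}$ ones, which is feasible since $\floor{J/2}\le J$. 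By Theorem~\ref{thm:globMax} these load vectors are exactly the loads of the Nash equilibria, and there are $\binom{J}{\floor{J/2}}$ of them, matching the bound.

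The conceptually delicate point is the upper bound: everything hinges on converting the geometric statement of Theorem~\ref{thm:eqLocal} into the combinatorial fact that all equilibria are $\vec{c}$-translates of $p$-subsets for a \emph{fixed} $p$, the fixedness of $p$ coming from the common constraint $\sum_j\ell_j=K$. The tightness direction is then routine once one observes that the cost is constant over $\mathcal{D}$, which is the small trick that makes the symmetric game's maximizers easy to identify.
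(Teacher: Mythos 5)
Your proof is correct, and both halves take a genuinely different route from the paper's. For the upper bound, the paper anchors at one fixed equilibrium $\vx$, forms $\mathcal{U}=\bigcup_{\vy\in\mathcal{E}_\Gamma}\supp^+(\vx-\vy)$ and $\mathcal{V}=\bigcup_{\vy\in\mathcal{E}_\Gamma}\supp^+(\vy-\vx)$, shows via a regularity lemma (no coordinate can be strictly above one equilibrium and strictly below another, a consequence of Theorem~\ref{thm:eqLocal}) that $\mathcal{U}\cap\mathcal{V}=\emptyset$, writes every equilibrium as $\vx-\sum_{u\in C}\ve_u+\sum_{v\in D}\ve_v$ with $C\subset\mathcal{U}$, $D\subset\mathcal{V}$, $|C|=|D|=k$, and then needs a Vandermonde-type identity, $\sum_{k}\binom{\alpha}{k}\binom{J-\alpha}{k}=\binom{J}{\alpha}$ (Lemma~\ref{lem:lbinom}), to collapse the resulting count. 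Your anchoring at the componentwise minimum $\vec{c}$ shortcuts all of this: Theorem~\ref{thm:eqLocal} pins every coordinate of every equilibrium into $\{c_j,c_j+1\}$, the common constraint $\sum_j x_j = K$ forces the number $p = K-\sum_j c_j$ of raised coordinates to be identical across equilibria, and the injection into $p$-subsets gives $|\mathcal{E}_\Gamma|\le\binom{J}{p}\le\binom{J}{\floor{\frac{J}{2}}}$ with no combinatorial identity needed — and with the slightly sharper intermediate bound $\binom{J}{p}$ as a bonus. For tightness, the paper argues by contradiction (Lemma~\ref{lem:exEq1}): if an equilibrium had some load $x_u>1$, transferring all of it to an empty medium (which exists since $K<J$) preserves the potential by symmetry, yielding a second equilibrium at $\normu{\cdot}$-distance greater than $1$, contradicting Theorem~\ref{thm:eqLocal}; it then checks that all $0/1$ vectors with $\floor{\frac{J}{2}}$ ones share the same potential. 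Your direct computation — the cost term is the constant $\gamma K$ on $\mathcal{D}$, and $H_n\le n$ with equality iff $n\le 1$ shows the maximizers of $\sum_j H_{\ell_j}$ are exactly the $0/1$ vectors summing to $K$ — identifies the global maximizers outright and invokes only Theorem~\ref{thm:globMax}, dispensing with Theorem~\ref{thm:eqLocal} in this half entirely. The trade-off: the paper's $\mathcal{U}/\mathcal{V}$ bookkeeping records a bit more structure about how equilibria can deviate from a given one, while your version is shorter, more elementary, and makes the fixed cardinality $p$ — the crux you rightly flag — completely explicit.
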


The proof of Proposition~\ref{thm:eqBound} is given in Appendix~\ref{app:eqBound}.


Note that the bound is in $\phantom{\Big\{}\hspace{-0.8em}\bigo{\frac{2^{J}}{\sqrt{J}}}$ and that it is independent of the number of seeds $K$. As the number of social media is typically small, then there is a limited number of equilibria.

\section{Algorithmic determination of an equilibrium}

In this section we see how to compute a Nash equilibrium. 
Note that, from Theorem~\ref{thm:eqBound}, computing all the loads of the Nash equilibria would require $\bigom{\frac{2^J}{\sqrt{J}}}$ operations. Further, for each load vector $\vl$ maximizing $\pot$, computing all the corresponding Nash equilibria $\vs \in S$ would require up to $\bigo{K!}$ operations because of the symmetry of the game.

\subsection{Maximization of the potential}
Consider the following optimization mechanism:
\begin{enumerate}[label=Step \arabic*), leftmargin=6em]
 \item Start with some $\vl \in \mathcal{D}$
 \item Find $\vl^*$ the argmax of $\pot$ on $\mathcal{V}(\vl)$
 \item If $\vl^* = \vl$ then stop
 \item Let $\vl = \vl^*$ and repeat from step 2)
\end{enumerate}

Thanks to Theorem~\ref{thm:globMax}, we know that this mechanism converges to a vector of loads of a Nash equilibrium. From the point of view of the seeds, it is similar to a guided best-response mechanism where at each step the seed which could increase the most the potential by changing its strategy is selected. 

The problem is that, in the worst cases, this algorithm visits all the load vectors of the domain $\mathcal{D}$, which leads to $\bigo{K^J}$ steps to find a maximum. However, we can exploit the M-concavity of function $f$ to compute a Nash equilibrium in a far more efficient way. To that end, we adapt the algorithm MODIFIED\_STEEPEST\_DESCENT given in~\cite[p.8]{AlgoMConvex} to our problem, which is presented in Algorithm~\ref{algo:MSD} below.
\begin{algorithm}\DontPrintSemicolon
 \label{algo:MSD}
 \caption{SD\_MAX}
 \KwIn{$\Gamma=(K,(N_j,\gamma_j)_{j\in \mathbb{J}})$}
 \KwOut{A vector in $\mathcal{E}_\Gamma$}
 Let $\vl = K\ve_1$ and $\vb = 0 \in \mathbb{Z}^J$\;
 \While{$\exists u,\ \ell_u - 1 \geq b_u$}
 {
   Compute $v \in \argmax_{t \in \mathbb{J}} \left( \frac{N_t}{\ell_t+1}-\gamma_t \right)$\;
   $b_v \leftarrow \ell_v+1$\;
   $\vl \leftarrow \vl - \ve_u + \ve_v$ \; 
 }
 \Return $\vl$
\end{algorithm}

\begin{proposition}
 Algorithm~\ref{algo:MSD} terminates, returns a vector in $\mathcal{E}_\Gamma$ with a time complexity in $\bigo{KJ^2}$.
\end{proposition}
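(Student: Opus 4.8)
The plan is to recognize Algorithm~\ref{algo:MSD} as the instance of \texttt{MODIFIED\_STEEPEST\_DESCENT} of \cite{AlgoMConvex} obtained by minimizing the M-convex function $-f$ (recall $f$ is M-concave, so $-f$ is M-convex), and then to read off the conclusion through Theorem~\ref{thm:globMax}. The dictionary is transparent: moving one player onto medium $t$ raises $\pot$ by exactly $\frac{N_t}{\ell_t+1}-\gamma_t$, the quantity maximised in the inner loop, so $v$ is the steepest-ascent coordinate of $\pot$ (equivalently the steepest-descent coordinate of $-f$), while $\vb$ accumulates lower bounds. Granting the reference, the returned $\vl$ is a global maximiser of $\pot$ over $\mathcal{D}$ (which exists since $\mathcal{D}$ is finite); and by Theorem~\ref{thm:globMax} the global maximisers of $\pot$ over $\mathcal{D}$ are precisely the loads of the Nash equilibria, i.e. the elements of $\mathcal{E}_\Gamma$. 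I would also point out the cleaner special structure: $\pot(\vl)=\sum_j\left(N_j H_{\ell_j}-\gamma_j\ell_j\right)$ is \emph{separable and concave}, with non-increasing increments $\frac{N_j}{n+1}-\gamma_j$, so SD\_MAX is morally the classical greedy for separable-concave resource allocation, which gives a self-contained correctness route independent of the general M-convex machinery.

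To make correctness self-contained I would carry two loop invariants. First, $\vl\in\mathcal{D}$: the update $-\ve_u+\ve_v$ preserves $\sum_j\ell_j=K$, and a player is removed from $u$ only when $\ell_u-1\ge b_u\ge 0$, so no coordinate becomes negative. Second, the key invariant $(B)$: there exists a global maximiser $\vl^\star$ of $\pot$ over $\mathcal{D}$ with $\ell^\star_j\ge\min(b_j,\ell_j)$ for every $j$. The payoff is a clean termination argument: the loop halts exactly when no $u$ satisfies $\ell_u-1\ge b_u$, i.e. when $\ell_j\le b_j$ for all $j$, whence $\min(b_j,\ell_j)=\ell_j$ and $(B)$ gives $\vl^\star\ge\vl$; since $\sum_j\ell^\star_j=K=\sum_j\ell_j$, this forces $\vl^\star=\vl$, so the returned $\vl$ is a global maximiser and therefore lies in $\mathcal{E}_\Gamma$.

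The step I expect to be the main obstacle is the maintenance of $(B)$: showing that the commitment $b_v\leftarrow\ell_v+1$ preserves the existence of a dominating maximiser. This is exactly where the steepest-ascent choice of $v$ and the M-concavity of $f$ are needed, and it is the substance of the correctness lemma of \cite{AlgoMConvex}. The tool is the strong exchange inequality~\eqref{eq:mconvex}: starting from a maximiser dominating the old bounds, one transfers mass onto $v$, using that $v$ carries the largest marginal increment, so that the exchanged point is again a maximiser dominating the updated bounds. I would carry this out through~\eqref{eq:mconvex}, taking particular care of the case where $v$ is the ``source'' coordinate and $b_v$ is pushed above any feasible load — there the bound is simply inactive, which is precisely why $(B)$ is stated with $\min(b_j,\ell_j)$ rather than $b_j$.

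Termination and the complexity bound both follow from the monovariant $\Phi=\sum_j\max(\ell_j-b_j,0)$, counting the ``free'' players not yet pinned by a lower bound. Initially $\Phi=K$ (all load on medium $1$, $\vb=0$) and $\Phi\ge 0$ throughout. Each iteration strictly decreases $\Phi$: setting $b_v\leftarrow\ell_v+1$ zeroes the $v$-term without raising any other, and when $u\ne v$ the removal from an eligible $u$ (with $\ell_u>b_u$) lowers its term by one, while when $u=v$ the $v$-term alone was already positive by eligibility and drops to zero. Hence $\Phi$ falls by at least $1$ per pass, so the loop runs at most $K$ times. Each pass costs $\bigo{J}$ — the maximisation over the $J$ media and the scan for an eligible $u$ — so the total running time is $\bigo{KJ}$, which is a fortiori within the claimed $\bigo{KJ^2}$ (indeed the stated bound is not tight).
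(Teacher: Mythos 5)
Your architecture is the same as the paper's: recognize Algorithm~\ref{algo:MSD} as the adaptation of MODIFIED\_STEEPEST\_DESCENT of \cite{AlgoMConvex} with the active domain encoded by the lower bounds $\vb$, translate the inner $\argmax$ into the marginal utilities $\frac{N_t}{\ell_t+1}-\gamma_t$, and bound the number of iterations by a monovariant. Your termination and complexity analysis is correct and in fact sharper and cleaner than the paper's: your $\Phi=\sum_j\max(\ell_j-b_j,0)$ starts at $K$ and strictly decreases by at least one per iteration, giving at most $K$ iterations and $\bigo{KJ}$ time, where the paper only argues that $\sum_{u\in\mathbb{J}}(K-b_u)$ decreases, i.e.\ at most $KJ$ iterations and $\bigo{KJ^2}$.

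The genuine gap is exactly the step you flagged and postponed: the maintenance of invariant $(B)$. It cannot be closed by ``granting the reference'', because the dictionary you use (and which the paper also asserts) --- that $v\in\argmax_{t}\left(\frac{N_t}{\ell_t+1}-\gamma_t\right)$ coincides with $\argmax_t f(\vl-\ve_u+\ve_t)$ --- is false at the candidate $t=u$: there $f(\vl-\ve_u+\ve_u)=f(\vl)$, so the value of ``staying'' is the marginal $\frac{N_u}{\ell_u}-\gamma_u$, strictly larger than the $\frac{N_u}{\ell_u+1}-\gamma_u$ that the formula assigns it. Hence an iteration can move strictly downhill and pin $b_v$ above every maximizer, destroying $(B)$. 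Concretely, take $J=2$, $K=2$, $N=(10,12)$, $\gamma=(0,0)$: the run is forced and goes $(2,0)\to(1,1)\to(0,2)$; in the second iteration $u=1$ is the only eligible source and the formula selects $v=2$ (since $12/2>10/2$) even though staying is worth $10/1$, so $b_2$ is set to $2$ and the algorithm returns $(0,2)$ with potential $18$, whereas the unique equilibrium load is $(1,1)$ with potential $22$ --- and right after that iteration $(B)$ would require a maximizer with second coordinate at least $2$, which does not exist. Note that the paper's own proof has the same soft spot: its displayed chain of equivalences is valid only for $t\neq u$, so the correctness lemma of \cite{AlgoMConvex} applies to the reference's steepest-descent step (where the option $t=u$ is valued at $f(\vl)$), not to Algorithm~\ref{algo:MSD} as printed. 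Your exchange-based maintenance of $(B)$ via \eqref{eq:mconvex} goes through once the inner step is corrected --- e.g.\ value $t=u$ at $\frac{N_u}{\ell_u}-\gamma_u$, move only when the best $t\neq u$ weakly beats staying, and otherwise pin $b_u\leftarrow\ell_u+1$ without moving (your $u=v$ branch, with the correct comparison); under that fix both your invariant argument and your $\Phi$-monovariant survive verbatim.
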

\begin{proof}
 We implemented the active domain $B$ of the algorithm used in~\cite[p.8]{AlgoMConvex} by a vector $\vb$ satisfying:
 $$ B = \left\{ \vx\ |\ \sum_{u \in \mathbb{J}} x_u = K \text{ and } \forall u\in \mathbb{J},\ x_u \geq b_u \right\}.$$
 
 We then remarked that we do not need to compute the potential since $v= \argmax_{t \in \mathbb{J}}f(\vl-\ve_u+\ve_t)$ is equivalent to:
  \begin{align*}
   & \forall t \in \mathbb{J},\ f(\vl-\ve_u+\ve_v) - f(\vl-\ve_u+\ve_t) \geq 0 \\
   &\Leftrightarrow \forall t \in \mathbb{J},\ \frac{N_v}{\ell_v+1}-\gamma_v - \frac{N_t}{\ell_t+1}+\gamma_t \geq 0 \\
   &\Leftrightarrow v = \argmax_{t \in \mathbb{J}}\left(\frac{N_t}{\ell_t+1}-\gamma_t\right). 
  \end{align*}

 We can then apply the same analysis as in~\cite{AlgoMConvex} for the correctness of the algorithm.

 The quantity $0 \leq \sum_{u \in \mathbb{J}} (K - b_u) \leq KJ$ decreases by at least one at each step of the algorithm. Therefore, the algorithm terminates with at most $KJ$ iterations. Moreover, finding a $u$ satisfying the loop condition and computing the value of $v$ can be done in $\bigo{J}$. Hence, this algorithms has a time complexity in $\bigo{KJ^2}$. 
\end{proof}
\iffullarticle
An algorithm in $\bigo{J^3\log K/J}$ is further discussed in Appendix~\ref{app:polyAlgo}.\\
\else
We have also designed a refinement of this algorithm  
thanks to some scaling properties of function $f$, 
which gives a complexity in
$\bigo{J^3\log K/J}$. It is omitted due to length requirements.
\fi
\subsection{An efficient learning mechanism}
Note that the previous algorithm starts with some arbitrary load vector in $\mathcal{D}$ and then iteratively finds the best improvement until reaching a global maximum. 

Instead, we propose a novel approach in which the seeds arrive one by one. We then show that at each arrival of seed $k$, the strategy $s_k$ can be computed in such way that after all arrivals, the resulting vector $\vs$ is a Nash equilibrium. This approach relies on the following theorem:

\begin{theorem}
 \label{thm:eqIncK}
 Let $\Gamma=(K,(N_j,\gamma_j)_{j\in \mathbb{J}})$ be a setting of the game and $\Gamma'=(K+1,(N_j,\gamma_j)_{j\in \mathbb{J}})$ be the setting obtained by adding an extra seed on $\Gamma$. Let $\vs$ be a Nash equilibrium for $\Gamma$ and $\vec{\sigma}$ the strategy profile of $\Gamma'$ in which the $K$ first seeds choose the same strategy as in $\vs$ (i.e. $s_k = \sigma_k$ for all $k\leq K$) and the additional seed chooses one of the social media which maximizes its payoff. Then $\vec{\sigma}$ is a Nash equilibrium for $\Gamma'$.
 
 Formally, let $\vl \in \mathcal{E}_\Gamma$ be the load of some Nash equilibrium of $\Gamma$ and $w\in \mathbb{J}$. Then
 $$ \vl+\ve_w \in \mathcal{E}_{\Gamma'} \Leftrightarrow w \in \argmax_{t \in \mathbb{J}} \left( \frac{N_t}{\ell_t+1} - \gamma_t \right).$$
\end{theorem}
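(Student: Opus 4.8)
The plan is to reduce everything to the local (one‑deviation) optimality characterization of Nash equilibria in load space, Eq.~\eqref{eq:eqL}, and to the single computation of the marginal effect of moving one unit of load. By Theorem~\ref{thm:globMax} — or, more directly, by the very derivation of Eq.~\eqref{eq:eqL} — a load vector lies in $\mathcal{E}_\Gamma$ (resp. $\mathcal{E}_{\Gamma'}$) iff it satisfies that one‑deviation condition on the corresponding domain $\mathcal{D}$, the one for $\Gamma'$ having loads summing to $K+1$. The key identity, coming from $H_{n+1}-H_n = 1/(n+1)$, is that
$$ \pot(\vl+\ve_t) - \pot(\vl) = \frac{N_t}{\ell_t+1} - \gamma_t, $$
so the quantity maximized in the theorem is exactly the marginal gain of adding one seed to medium $t$. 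Writing $g_t^+(\vl) = \frac{N_t}{\ell_t+1}-\gamma_t$ for this \emph{addition} marginal and $g_t^-(\vl) = \frac{N_t}{\ell_t}-\gamma_t$ for the \emph{removal} marginal, the condition Eq.~\eqref{eq:eqL} for $\vl$ reads $g_i^+(\vl) \le g_j^-(\vl)$ for all $i$ and all $j$ with $\ell_j>0$. The decisive bookkeeping observation is that, setting $\vm = \vl+\ve_w$, the marginals are unchanged off coordinate $w$, while at $w$ one has $g_w^-(\vm) = g_w^+(\vl)$ and $g_w^+(\vm) = \frac{N_w}{\ell_w+2}-\gamma_w$; this equality $g_w^-(\vm)=g_w^+(\vl)$ is what ties the two settings together.

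For ($\Leftarrow$) I would assume $w \in \argmax_t g_t^+(\vl)$ and verify Eq.~\eqref{eq:eqL} for $\vm$, i.e. $g_i^+(\vm) \le g_j^-(\vm)$ for every admissible deviation $\vm+\ve_i-\ve_j$ with $m_j>0$, splitting on whether the departing medium $j$ equals $w$. When $j\neq w$ the removal marginal is $g_j^-(\vl)$, and the inequality follows either from Eq.~\eqref{eq:eqL} for $\vl$ (if $i\neq w$) or from the chain $g_w^+(\vm) \le g_w^+(\vl) \le g_j^-(\vl)$ (if $i=w$), using the monotonicity $\frac{N_w}{\ell_w+2} \le \frac{N_w}{\ell_w+1}$ and then the equilibrium condition for $\vl$. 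When $j=w$ the removal marginal is $g_w^-(\vm)=g_w^+(\vl)$, and the required $g_i^+(\vl) \le g_w^+(\vl)$ is precisely the $\argmax$ hypothesis on $w$. All cases close, giving $\vm \in \mathcal{E}_{\Gamma'}$.

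For ($\Rightarrow$) I would assume $\vm = \vl+\ve_w \in \mathcal{E}_{\Gamma'}$ and instantiate Eq.~\eqref{eq:eqL} for $\vm$ with the single family of deviations that move a seed out of $w$ into an arbitrary medium $v$; these are admissible because $m_w=\ell_w+1>0$. The resulting inequalities read $g_v^+(\vm) \le g_w^-(\vm)$, that is $g_v^+(\vl) \le g_w^+(\vl)$ for all $v$, which is exactly $w \in \argmax_t g_t^+(\vl)$.

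The only real obstacle is the case bookkeeping in the ($\Leftarrow$) direction: one must ensure every deviation from $\vm$ stays in $\mathcal{D}$ with nonnegative coordinates (so the departing medium must carry positive load), and that the two \emph{new} marginals created at $w$, namely $g_w^-(\vm)$ and $g_w^+(\vm)$, are each handled with the correct inequality. Everything else is the routine translation of the one‑deviation condition into comparisons of the per‑medium marginals $g_t^{\pm}$, together with the strict monotonicity of $n \mapsto N_t/n$ that makes the extra move out of the freshly loaded medium $w$ harmless.
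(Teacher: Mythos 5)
Your proof is correct and takes essentially the same route as the paper's: both verify the one-deviation condition~\eqref{eq:eqL} for $\vl+\ve_w$ via the same three-case split (deviation not touching $w$, deviation into $w$, deviation out of $w$), using respectively the equilibrium condition for $\vl$, the monotonicity $\tfrac{N_w}{\ell_w+2}\le\tfrac{N_w}{\ell_w+1}$ chained with that condition, and the $\argmax$ hypothesis --- your marginals $g_t^{\pm}$ are just a clean repackaging of the paper's potential differences, with $g_w^-(\vl+\ve_w)=g_w^+(\vl)$ playing the role of the paper's third case. Your converse, instantiating~\eqref{eq:eqL} at $\vl+\ve_w$ with deviations out of $w$, is likewise the paper's argument in contrapositive form.
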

\begin{proof}
 Let 
$  w \in \argmax_{t \in \mathbb{J}} \left( \frac{N_t}{\ell_t+1} - \gamma_t \right).$
 We proceed to show that $\vl + \ve_w$ is in $\mathcal{E}_{\Gamma'}$.
 
 Let $u,v \in \mathbb{J}$ such that $\ell_u+\delta(u,w) > 0$ and $u \neq v$. We need to show that $\pot(\vl+\ve_w) \geq \pot(\vl+\ve_w-\ve_u+\ve_v)$. There are three cases detailed below.
 
 First, consider that $u \neq w$ and $v \neq w$. Then $\pot(\vl+\ve_w) - \pot(\vl+\ve_w-\ve_u+\ve_v) = \pot(\vl)-\pot(\vl-\ve_u+\ve_v)$ so it is proven in this case.
 
 Second, consider that $v = w$. We have
 \begin{align*}
  & \pot(\vl+\ve_w) - \pot(\vl+\ve_w-\ve_u+\ve_v)\\
  &= \pot(\vl+\ve_w)-\pot(\vl+2\ve_w-\ve_u)\\
  &=\frac{N_u}{\ell_u}-\frac{N_w}{\ell_w+2}+\gamma_w-\gamma_u\\
  &\geq \frac{N_u}{\ell_u}-\frac{N_w}{\ell_w+1}+\gamma_w-\gamma_u \geq 0
 \end{align*}
 since $\vl \in \mathcal{E}_\Gamma$.
 
 Third, consider that $u = w$. We have
 \begin{align*}
  &\pot(\vl+\ve_w) - \pot(\vl+\ve_w-\ve_u+\ve_v)\\
  &= \pot(\vl+\ve_w) - \pot(\vl+\ve_v)\\
  &= \frac{N_w}{\ell_w+1}-\frac{N_v}{\ell_v+1}+\gamma_w-\gamma_v \geq 0
 \end{align*}
 by definition of $w$.
 
 The reciprocal follows from the last formula: if $\frac{N_w}{\ell_w+1}-\gamma_w$ was not maximal, then there would be some $v$ such that $\pot(\vl+\ve_w) < \pot(\vl+\ve_v)$. Hence $\vl + \ve_w$ would not be in $\mathcal{E}_{\Gamma'}$ from Theorem~\ref{thm:globMax}.
\end{proof}

We use Theorem~\ref{thm:eqIncK} to build an efficient algorithm finding a vector of loads of a Nash equilibrium  (Algorithm~\ref{algo:OrderLearning}): it begins with no seed and simulates $K$ times the arrival of a seed maximizing its payoff.
\setlength{\textfloatsep}{-1pt}
\begin{algorithm}\DontPrintSemicolon
 \label{algo:OrderLearning}
 \caption{ORDER\_LEARNING}
 \KwIn{$\Gamma=(K,(N_j,\gamma_j)_{j\in \mathbb{J}})$}
 \KwOut{A vector in $\mathcal{E}_\Gamma$}
 Let $\vl = \vec{0} \in \mathbb{Z}^J$ and $k = 1$\;
 \While{$k \leq K$ }
 {
   Compute $w \in \argmax_{t \in \mathbb{J}} \left( \frac{N_t}{\ell_t+1}-\gamma_t \right)$\;
   $\vl \leftarrow \vl + \ve_w$\;
   $k \leftarrow k + 1$\; 
 }
 \Return $\vl$
\end{algorithm}
\setlength{\textfloatsep}{12pt}
\begin{proposition}
 Algorithm~\ref{algo:OrderLearning} terminates and returns a vector in $\mathcal{E}_\Gamma$ with a time complexity in $\bigo{KJ}$.
\end{proposition}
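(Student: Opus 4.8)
The plan is to separate the three claims: termination and the $\bigo{KJ}$ time bound are essentially bookkeeping, while correctness is the substance and will follow by a clean induction on the number of seeds, powered entirely by Theorem~\ref{thm:eqIncK}.

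First I would settle termination and complexity. The counter $k$ is initialised to $1$, incremented by exactly one in each pass, and the loop runs while $k \leq K$; hence the body executes exactly $K$ times and the algorithm halts. The only nontrivial work inside the loop is computing an element $w \in \argmax_{t \in \mathbb{J}}\left(\frac{N_t}{\ell_t+1}-\gamma_t\right)$, a single scan of the $J$ coordinates costing $\bigo{J}$, while the update $\vl \leftarrow \vl + \ve_w$ is $\bigo{1}$. Multiplying by the $K$ iterations yields the claimed $\bigo{KJ}$.

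For correctness I would introduce the intermediate settings $\Gamma_k = (k,(N_j,\gamma_j)_{j \in \mathbb{J}})$ for $0 \leq k \leq K$ and write $\vlk{k}$ for the value of $\vl$ after $k$ iterations, so that $\vlk{0} = \vec{0}$ and $\vlk{K}$ is the returned vector. The statement to prove by induction on $k$ is that $\vlk{k} \in \mathcal{E}_{\Gamma_k}$, i.e. $\vlk{k}$ is the load of a Nash equilibrium of the $k$-seed game. The base case $k=0$ is immediate, since the load domain of $\Gamma_0$ is the singleton $\{\vec{0}\}$, which trivially maximises $\pot$ globally and therefore lies in $\mathcal{E}_{\Gamma_0}$.

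The inductive step is where Theorem~\ref{thm:eqIncK} does all the work. Assuming $\vlk{k} \in \mathcal{E}_{\Gamma_k}$, the $(k+1)$-st iteration selects $w \in \argmax_{t \in \mathbb{J}}\left(\frac{N_t}{\ell^{(k)}_t+1}-\gamma_t\right)$ and sets $\vlk{k+1} = \vlk{k} + \ve_w$. This matches verbatim the selection criterion of Theorem~\ref{thm:eqIncK} applied to $\Gamma_k$ with $\vl = \vlk{k}$, whose $\Leftarrow$ direction gives $\vlk{k} + \ve_w \in \mathcal{E}_{\Gamma_{k+1}}$, closing the induction. Setting $k = K$ then shows the output $\vlk{K}$ lies in $\mathcal{E}_{\Gamma_K} = \mathcal{E}_\Gamma$. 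I do not expect a genuine obstacle: the one place that warrants care is lining up the algorithm's indexing (counter starting at $1$, load starting from $\vec{0}$) with the induction so that the loop's argmax is literally the condition appearing in Theorem~\ref{thm:eqIncK}, and verifying that the $0$-seed base case is legitimate.
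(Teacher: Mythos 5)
Your proof is correct and follows the paper's argument essentially verbatim: the paper likewise defines the intermediate settings $\Gamma_k=(k,(N_j,\gamma_j)_{j\in\mathbb{J}})$, starts the induction from $\vec{0}\in\mathcal{E}_{\Gamma_0}$, applies Theorem~\ref{thm:eqIncK} at each arrival, and bounds the cost by $K$ iterations at $\bigo{J}$ each. Your version merely spells out the indexing and the base case a bit more explicitly than the paper does.
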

\begin{proof}
 Let $\Gamma_k=(k,(N_j,\gamma_j)_{j \in \mathbb{J}})$ for $k \in \{0,...,K\}$. Since $\vec{0} \in \mathbb{Z}^J$ is a vector of loads of a Nash equilibrium of $\Gamma_0$, then, from Theorem~\ref{thm:eqIncK}, at the end of the $k$th iteration of the loop, $\vl$ is a vector of loads of a Nash equilibrium of $\Gamma_k$, hence the correctness of Algorithm~\ref{algo:OrderLearning}.
 
 Since we can compute $w$ in $\bigo{J}$ and there are $K$ iterations of the loop, then the time complexity of Algorithm~\ref{algo:OrderLearning} is in $\bigo{KJ}$.
\end{proof}


\section{Asymptotic Behavior}
In this section, we discuss the form of the Nash equilibria when we have a lot more seeds than social media, so $K \gg J$. We are interested in this case in practice as the activity in the Internet tend to be concentrated in a restricted number of famous websites. 

\subsection{Intuition}

First, we can make a hypothesis about the asymptotic behavior of this game when $K \to \infty$ according to the form of the potential. Recall that
$ H_n \underset{n \to \infty}{\sim} \ln(n) + \mu $,
where $\mu$ is the Euler-Mascheroni constant. Then in order to find approximate Nash equilibria, we can study the function
$$P(\vl) = \sum_{j,\ \ell_j>0} \left( N_j \ln (\ell_j) - \gamma_j \ell_j \right).$$
We can see that, for large values of $\ell_j$, the linear part in $\gamma_j \ell_j$ is determinant compared to the logarithm part in $N_j \ln(\ell_j)$. Therefore we can make the hypothesis that when the quantity $\sum_j \ell_j = K$ is large enough, then the only $\ell_j$ that continue to increase are the ones with minimal cost. Then, it seems natural that all social media with minimal cost would behave as if they were in a subgame where new seeds would only choose them.

\subsection{Asymptotic Analysis}
Following our hypothesis, we define $\gm$ the minimal cost and $G$ the set of social media with minimal costs:
$$ \gm = \min_j \gamma_j \text{ and } G = \argmin_j \gamma_j. $$

We know, thanks to Theorem~\ref{thm:eqIncK}, that when $K$ increases, the coordinates of the loads of the Nash equilibrium we consider can only increase. We proceed to show our intuition. In the following, we note $\Gamma_K = (K,(N_j,\gamma_j)_{j \in \mathbb{J}})$ and $\mathcal{E}_K = \mathcal{E}_{\Gamma_K}$. We study the vectors in $\mathcal{E}_K$ obtained with the mechanism implemented in Algorithm~\ref{algo:OrderLearning}. Let $\vlk{K}$ be the vector in $\mathcal{E}_K$ obtained after the $K$th iteration of the loop in the algorithm.

\begin{theorem}
 \label{thm:asymptoticForm}
 When $K$ goes to infinity, at the Nash equilibria, the social media are divided into two groups:
 \begin{itemize}
  \item The loads of the social media with non-minimal cost stop increasing when they reach a constant. Formally:
  $$\forall j \in \mathbb{J}\setminus G,\ \ellk{K}_j \underset{K \to \infty}{\longrightarrow} \ceil{\frac{N_j}{\gamma_j-\gm}}-1.$$
  \item The loads of the social media with minimal cost goes to infinity, and the proportion of seeds a social medium get among the one with minimal cost is equal to its market share. Formally: 
 \end{itemize}
    $$\forall w \in G,\ \cfrac{\ellk{K}_w}{\sum_{t \in G} \ellk{K}_t} \underset{K \to \infty}{\longrightarrow} \cfrac{N_w}{\sum_{t \in G} N_t}.$$
\end{theorem}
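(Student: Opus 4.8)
The plan is to work directly with the specific sequence $(\vlk{K})_K$ produced by Algorithm~\ref{algo:OrderLearning}. By Theorem~\ref{thm:eqIncK} this sequence starts at $\vec{0}$ and at each step increments a coordinate $w \in \argmax_{t \in \mathbb{J}}\left(\frac{N_t}{\ell_t+1}-\gamma_t\right)$; in particular the loads are non-decreasing in $K$, and every decision is governed by the \emph{marginal value} $m_t(\ell)=\frac{N_t}{\ell+1}-\gamma_t$ of adding one seed to a medium currently at load $\ell$. The whole argument is driven by comparing these marginal values against the baseline $-\gm$: for a minimal-cost medium $w\in G$ one always has $m_w(\ell)=\frac{N_w}{\ell+1}-\gm > -\gm$, whereas $m_w(\ell)\to -\gm$ from above as $\ell\to\infty$.

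First I would treat the media of non-minimal cost. Fix $j \in \mathbb{J}\setminus G$ and set $n_j^* = \ceil{\frac{N_j}{\gamma_j-\gm}}-1$, which is exactly the largest integer $\ell$ with $\frac{N_j}{\ell}>\gamma_j-\gm$. A direct computation then gives $m_j(n_j^*) \le -\gm$ while $m_j(n) > -\gm$ for every $n < n_j^*$. Since a minimal-cost medium always offers marginal value strictly above $-\gm$, medium $j$ can never be selected while its load equals $n_j^*$, so by monotonicity $\ellk{K}_j \le n_j^*$ for all $K$. Because $\sum_j \ellk{K}_j = K$ and each non-minimal coordinate is capped, $\sum_{w\in G}\ellk{K}_w \ge K - \sum_{j\notin G} n_j^* \to \infty$; a short comparison (a bounded $\ellk{K}_w$ would keep a marginal value bounded away from $-\gm$ and would be preferred to any minimal-cost medium of large load) upgrades this to $\ellk{K}_w \to \infty$ for every $w\in G$, whence $\max_{w\in G} m_w(\ellk{K}_w)\to -\gm$. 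I close Part 1 by a squeeze: each $\ellk{K}_j$ is monotone and bounded, hence converges to some $\hat n_j\le n_j^*$; if $\hat n_j<n_j^*$ then $m_j(\hat n_j)>-\gm$ is bounded away from $-\gm$ by a fixed $\varepsilon>0$, so for large $K$ both the minimal-cost media and the saturated non-minimal ones have marginal value below $-\gm+\varepsilon$, forcing the argmax onto a below-cap non-minimal medium and incrementing its load --- contradicting convergence. Thus $\ellk{K}_j\to n_j^*$.

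For Part 2 I would note that, once all non-minimal media are saturated (which happens at a finite $K$ by the above), every further seed is allocated among $G$ by the highest-averages rule $w\in\argmax_{w\in G}\frac{N_w}{\ell_w+1}$, i.e. the D'Hondt/Jefferson apportionment of ``votes'' $N_w$. The key invariant is that for all $w,t\in G$ with $\ellk{K}_t\ge 1$ one has $\frac{N_w}{\ellk{K}_w+1}\le \frac{N_t}{\ellk{K}_t}$ (the last seat given to $t$ was at least as valuable as the next seat for $w$, using that loads only increase). Rearranging yields $\left|\frac{\ellk{K}_w}{N_w}-\frac{\ellk{K}_t}{N_t}\right| \le \frac{1}{\min_{t\in G} N_t}$, so all ratios $\ellk{K}_w/N_w$ lie in a window of fixed width. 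Writing $L=\sum_{t\in G}\ellk{K}_t$ and $\mathcal{N}=\sum_{t\in G}N_t$ and summing the bounds, this window is pinned within $O(1)$ of $L/\mathcal{N}$, so $\frac{\ellk{K}_w}{L}\to \frac{N_w}{\mathcal{N}}$ as $L\to\infty$, which is the claim.

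The main obstacle I anticipate is the coupling in Part 1: taken separately, the cap $\ellk{K}_j\le n_j^*$ and the divergence of the minimal-cost loads are routine, but showing that each non-minimal medium reaches \emph{exactly} $n_j^*$ requires the quantitative fact that $\max_{w\in G}m_w(\ellk{K}_w)\to -\gm$, so that a below-cap medium eventually wins the argmax. The apportionment invariant of Part 2 is classical but must be set up with care to accommodate ties in the argmax and the boundary case where $\frac{N_j}{\gamma_j-\gm}$ is itself an integer (where the $n_j^*$-th seat sits exactly at the $-\gm$ threshold and is therefore never taken).
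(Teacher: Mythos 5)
Your proof is correct, and while Part 1 retraces the paper's own path, Part 2 takes a genuinely different and arguably tighter route. For the non-minimal media, your cap-plus-squeeze argument is essentially a repackaging of the paper's two lemmas: the paper first shows the maximum marginal value $\mk{K}=\max_t\bigl(N_t/(\ellk{K}_t+1)-\gamma_t\bigr)$ is decreasing and tends to $-\gm$ (Lemma~\ref{lem:maxLimit}), then characterizes the coordinates that keep increasing as exactly those with marginal value above $-\gm$ (Lemma~\ref{lem:eqInfty}), which together give $\ellk{K}_j\to\ceil{N_j/(\gamma_j-\gm)}-1$; your squeeze via the contradiction ``a below-cap non-minimal medium would eventually win the argmax at every step, yet can only be incremented finitely often'' is the same mechanism, and you correctly handle the boundary case $N_j/(\gamma_j-\gm)\in\mathbb{N}$, where the marginal at the cap equals $-\gm$ exactly and is strictly beaten by every $w\in G$ since $N_w/(\ell+1)-\gm>-\gm$ always. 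For the minimal-cost media, the paper proceeds by continuous relaxation: it replaces $H_n$ by $\ln n+\mu$, maximizes $P(\vx)=\sum_{t\in G}\left(N_t\ln x_t-\gm x_t\right)$ over the simplex by a Lagrangian computation to get $x^*_w=K_G N_w/\sum_{t\in G}N_t$, and then passes back to the discrete sequence --- a step it leaves at the level of an approximation argument. Your D'Hondt/Jefferson invariant $N_w/(\ellk{K}_w+1)\leq N_t/\ellk{K}_t$ (valid for all $K$ with $\ellk{K}_t\geq 1$, because the seat that brought $t$ to its current load beat $w$'s then-current marginal, loads only increase, and the costs $\gm$ cancel within $G$) is purely discrete and yields the quantitative bound $\bigl|\ellk{K}_w/N_w-\ellk{K}_t/N_t\bigr|\leq 1/\min_{t\in G}N_t$, hence, writing $L=\sum_{t\in G}\ellk{K}_t$, the explicit estimate $\ellk{K}_w/L=N_w/\bigl(\sum_{t\in G}N_t\bigr)+\bigo{1/L}$. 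What each approach buys: the paper's Lagrangian argument identifies the limit quickly and connects the discrete game to its continuous counterpart, but is informal in the passage from the relaxed optimum to the algorithmic sequence; your apportionment argument gives a fully rigorous finite-$K$ error bound with an explicit constant and handles ties for free since only weak inequalities are used. One small point to make explicit in a write-up: the two-sided ratio bound requires $\ellk{K}_w\geq 1$ for both indices, which holds for $K$ large by the divergence of the minimal-cost loads established in Part 1 (the paper's Eq.~\eqref{eq:minCostInfty}).
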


The proof of Theorem~\ref{thm:asymptoticForm} is given in Appendix~\ref{app:asymptBehavior}.

\section{Numerical Results}

Figure~\ref{fig:asymptoticBehavior} shows the convergence of the equilibrium when the number of seeds $K$ grows large. Note that there may be up to $3$ equilibria and that the plots of the figure correspond to the outputs of Algorithm~\ref{algo:OrderLearning}. The asymptotes obtained in Theorem~\ref{thm:asymptoticForm} are represented in dashed lines with colors matching those of the loads of their associated social media (SM). The SM $2$ and $3$, which have minimal cost, have loads growing to infinity with the number of seeds. The asymptote of SM $2$ has a higher slope than that of $3$ because it has a higher number of subscribers ($N_2>N_3$).
Finally, while for large values of seeds the cost of the social media is predominant, in contrast, for low values of seeds, the number of customers $M$ plays the larger role in determining the loads of the different social media.

\setlength{\abovecaptionskip}{4pt plus 2pt minus 2pt}
\begin{figure}[htb]
\resizebox{\columnwidth}{!}{\input{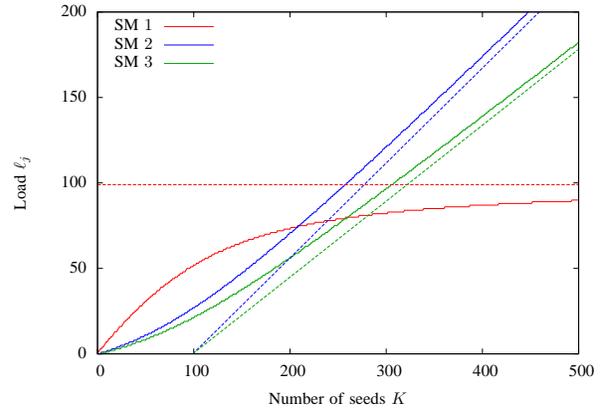}}
\caption{Convergence to the asymptotic behavior. Case with: $(N_1,\gamma_1) = (100,2)$, $(N_2,\gamma_2) = (25,1)$ and $(N_3,\gamma_3) = (20,1)$.}
\label{fig:asymptoticBehavior}
\end{figure}
\begin{figure}[htb]
\resizebox{\columnwidth}{!}{\input{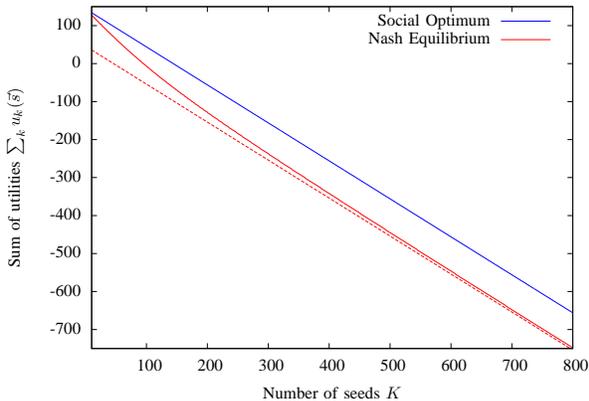}}
\caption{Sum of utilities at equilibrium compared with the social optimum (same setting that Figure~\ref{fig:asymptoticBehavior}).}
\label{fig:PoA}
\end{figure}

Figure~\ref{fig:PoA} shows the evolution of the social welfare, that is, the sum of total utilities, $\sum_k u_k(\vs)$, at the Nash equilibrium and at the social optimum. The asymptotic behavior at the Nash equilibria is given by $-\gamma_m K$ from Theorem~\ref{thm:asymptoticForm}.
Further, let ${\cal L} = \{j, N_j \geq \gamma_j\}$. Recall that the social optimum is the strategy vector maximizing the social welfare. Then, for $K$ large enough, the social optimum satisfies $ \max_{\vec{\ell}} \sum_{j, \ell_j>0} (N_j-\gamma_j \ell_j) = \sum_{j \in \cal{L}} (N_j-\gamma_j)-\gamma_m (K-|{\cal L}|) \sim - \gamma_m K$.  Hence, as the number of seeds grows to infinity, the price of anarchy converges to $1$. 

Finally, Figures~\ref{fig:NSensitivity} and \ref{fig:GSensitivity} show the sensitivity of the equilibria with respect to $N$ and $\gamma$ for a case with $J=2$ social media. 

We observe that the load of a social medium is increasing with its number of customers, as expected (Fig. \ref{fig:NSensitivity}). Further, if the dissemination cost of SM $2$ is higher or equal to that of SM $1$ and if it has no customer, then its load is zero, as exhibited in the red and blue plots. Otherwise, even though it has no customer, if its cost is minimal, it will receive some seeds (green plot). Finally, note that as the cost of SM $1$ decreases, the number of customers in the SM $2$ has lower effect of the evolution of the load $\ell_2$.

We also observe that the load of a social medium is decreasing with its dissemination cost, as expected (Fig.~\ref{fig:GSensitivity}). 
Further, numerical results show that the load decreases more abruptly for lower number of users $N_1$, but that the drop occurs for larger values of $\gamma_2$.

\begin{figure}[htb]
\resizebox{\columnwidth}{!}{\input{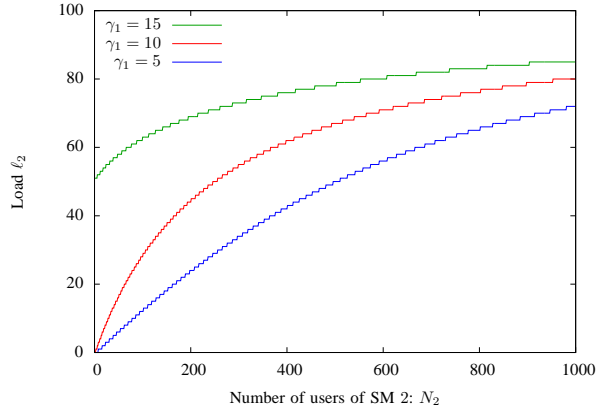}}
\caption{Influence of the number of users. Case with 2 social media: $N_1=250$, $\gamma_2=10$ and $K=100$.}
\label{fig:NSensitivity}
\end{figure}

\begin{figure}[htb]
\resizebox{\columnwidth}{!}{\input{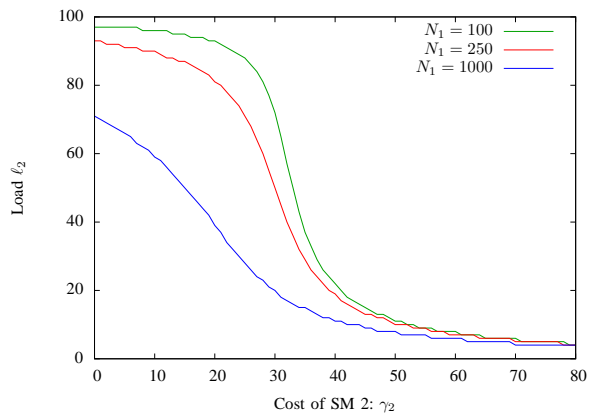}}
\caption{Influence of the dissemination cost. Case with 2 social media: $\gamma_1 = 30$, $N_2=250$ and $K=100$.}
\label{fig:GSensitivity}
\end{figure}

\section{Conclusion}
In this paper, we have studied competition for popularity of seeds among several social media. We have shown that the game is equivalent to a congestion game
and hence  has a potential. We then studied the properties of the potential in terms of $M$-concavity. We have shown that there may exist several Nash equilibria, all belonging to a single neighborhood and provided examples where the number of equilibria is maximal. We have provided a novel efficient learning algorithm based on a remarkable property of the Nash equilibria in some subgames. We also investigated the asymptotic behavior of the equilibria of the game and the price of anarchy. 
As future work, we will study the underlying competition among the social media in the Stackelberg setting for a discrete number of seeds: according to their number of subscribers (who consume content), how could they appropriately set up their prices? We further plan to extend our model to the case where seeds have different dissemination utilities for sending to the various media.
The game is no more equivalent to a congestion game but turns out to 
be equivalent to crowding games \cite{milch}. This allows to show existence of
(pure) equilibria but best response policies need not converge, as
there need not be a potential anymore. Thus, designing learning algorithms
for this extension
is yet an open problem.

\bibliographystyle{IEEEtran}
\bibliography{discrete}
\appendices
\section{Proof of Proposition~\ref{thm:eqBound}}
\label{app:eqBound}

\subsection{Proof of the upper bound $ |\mathcal{E}_\Gamma| \leq \binom{J}{\floor{\frac{J}{2}}}.$}
\begin{lemma}
 \label{lem:eqRegul}
 Let $\vx$ be a load vector at a Nash equilibrium, $\vx \in \mathcal{E}_\Gamma$, and $u$ a social medium, $u \in \mathbb{J}$. Then
 $$ \vspace{-0.5em}\left( \exists \vy \in \mathcal{E}_\Gamma,\ x_u > y_u \right) \Rightarrow \left( \forall \vz \in \mathcal{E}_\Gamma,\ x_u \geq z_u \right). $$
\end{lemma}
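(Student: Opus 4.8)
The plan is to reduce everything to the neighborhood structure of the equilibria established in Theorem~\ref{thm:eqLocal}, which guarantees that any two load vectors in $\mathcal{E}_\Gamma$ differ by at most $1$ in each coordinate, i.e.\ $\normu{\vx-\vy}\leq 1$ for all $\vx,\vy\in\mathcal{E}_\Gamma$. Since loads are integers, this means every coordinate difference between two equilibria lies in $\{-1,0,1\}$. The whole argument is then a short deduction from this fact together with integrality.

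First I would exploit the hypothesis. Assume there exists $\vy\in\mathcal{E}_\Gamma$ with $x_u>y_u$. Applying Theorem~\ref{thm:eqLocal} to the pair $(\vx,\vy)$ gives $x_u-y_u\leq 1$; combined with $x_u>y_u$ and the integrality of the loads, this forces $x_u=y_u+1$. In particular the coordinate $u$ of $\vx$ already sits at the top of the range permitted relative to $\vy$.

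Next I would argue by contradiction. Suppose some $\vz\in\mathcal{E}_\Gamma$ satisfied $z_u>x_u$. Applying Theorem~\ref{thm:eqLocal} to $(\vz,\vx)$ gives $z_u-x_u\leq 1$, and by integrality $z_u=x_u+1=y_u+2$. But then $z_u-y_u=2$, which contradicts $\normu{\vz-\vy}\leq 1$, itself guaranteed by Theorem~\ref{thm:eqLocal} applied to $(\vz,\vy)$. Hence no such $\vz$ exists, so $x_u\geq z_u$ for every $\vz\in\mathcal{E}_\Gamma$, which is exactly the claimed implication.

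There is no serious obstacle here beyond invoking the $\ell_\infty$-neighborhood bound correctly; the essential content is that a bound of $1$ on pairwise coordinate gaps, together with integrality, behaves transitively enough that a coordinate which strictly exceeds even a single equilibrium automatically dominates all of them. The only points requiring care are that Theorem~\ref{thm:eqLocal} is applied three times (to $(\vx,\vy)$, $(\vz,\vx)$ and $(\vz,\vy)$), and that it is integrality of the loads that upgrades the inequality $x_u-y_u\leq 1$ into the equality $x_u=y_u+1$ used in the contradiction.
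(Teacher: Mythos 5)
Your proof is correct and takes essentially the same route as the paper's: both arguments combine the bound $\normu{\vx-\vy}\leq 1$ from Theorem~\ref{thm:eqLocal} with integrality to force $x_u = y_u+1$ and $z_u = x_u+1$ for a hypothetical counterexample $\vz$, and then derive the contradiction $z_u - y_u = 2$ against Theorem~\ref{thm:eqLocal} applied to the pair $(\vy,\vz)$. Nothing is missing; your write-up is just a slightly more explicit version of the paper's three-line proof.
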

\begin{proof}
 Assume that there exists $\vy$ and $\vz$ in $\mathcal{E}_\Gamma$ such that $x_u > y_u$ and $x_u < z_u$.
 Then, by Theorem~\ref{thm:eqLocal}, $y_u = x_u - 1$ and $z_u = x_u + 1$. Hence $z_u - y_u = 2$ which contradicts Theorem~\ref{thm:eqLocal}.
\end{proof}

\begin{lemma}
 \label{lem:lbinom}
 Let $\alpha \in \mathbb{J}$. Then
 $$ \sum_{k=0}^{\min(\alpha,J-\alpha)} \binom{\alpha}{k} \binom{J-\alpha}{k} = \binom{J}{\alpha}.$$
\end{lemma}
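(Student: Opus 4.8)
The plan is to recognize this identity as a disguised form of the Vandermonde convolution and to reduce to it by a single application of the symmetry of binomial coefficients. The obstacle, such as it is, lies entirely in spotting the right rewriting; once that is done the computation is immediate.

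First I would exploit the symmetry $\binom{J-\alpha}{k} = \binom{J-\alpha}{(J-\alpha)-k}$, valid for $0 \leq k \leq J-\alpha$, to transform the summand. This turns the left-hand side into $\sum_{k} \binom{\alpha}{k}\binom{J-\alpha}{(J-\alpha)-k}$, which is exactly the Vandermonde convolution $\sum_k \binom{m}{k}\binom{n}{p-k} = \binom{m+n}{p}$ with the choice $m = \alpha$, $n = J-\alpha$, and $p = J-\alpha$. Applying it gives $\binom{\alpha + (J-\alpha)}{J-\alpha} = \binom{J}{J-\alpha}$, and a final use of $\binom{J}{J-\alpha} = \binom{J}{\alpha}$ yields the claimed value.

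Second I would verify that the stated summation range $k = 0,\dots,\min(\alpha, J-\alpha)$ is the correct one and loses no terms: the factor $\binom{\alpha}{k}$ vanishes as soon as $k > \alpha$, and $\binom{J-\alpha}{k}$ vanishes as soon as $k > J-\alpha$, so every term with $k > \min(\alpha, J-\alpha)$ is zero and may be discarded. This also confirms that the symmetry step above is applied only in the range where it is legitimate.

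If a self-contained argument is preferred over citing Vandermonde, I would instead give the combinatorial proof: count the $(J-\alpha)$-subsets of a $J$-element ground set (there are $\binom{J}{J-\alpha} = \binom{J}{\alpha}$ of them) by first splitting the ground set into a block $A$ of size $\alpha$ and a block $B$ of size $J-\alpha$, then classifying each such subset by the number $k$ of elements it takes from $A$. Choosing $k$ elements from $A$ and the remaining $(J-\alpha)-k$ from $B$ gives $\binom{\alpha}{k}\binom{J-\alpha}{(J-\alpha)-k} = \binom{\alpha}{k}\binom{J-\alpha}{k}$ subsets, and summing over the admissible $k$ reproduces the left-hand side. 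Either route is short; the only thing to get right is the bookkeeping on the index range.
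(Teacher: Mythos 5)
Your proof is correct, and your primary route is genuinely different from the paper's. The paper gives a self-contained double-counting argument: after reducing without loss of generality to $\alpha \leq J-\alpha$ via $\binom{J}{\alpha}=\binom{J}{J-\alpha}$, it counts the $\alpha$-subsets of $\mathbb{J}$ by splitting the ground set into blocks $A$, $B$ of sizes $\alpha$ and $J-\alpha$ and conditioning on the number $k$ of elements taken from $B$, obtaining $\binom{J}{\alpha}=\sum_k \binom{\alpha}{\alpha-k}\binom{J-\alpha}{k}$ and then the claim from $\binom{\alpha}{\alpha-k}=\binom{\alpha}{k}$. You instead reduce to the Vandermonde convolution as a black box: one application of $\binom{J-\alpha}{k}=\binom{J-\alpha}{(J-\alpha)-k}$ rewrites the sum as $\sum_k \binom{\alpha}{k}\binom{J-\alpha}{(J-\alpha)-k}=\binom{J}{J-\alpha}=\binom{J}{\alpha}$, and your observation that every term with $k>\min(\alpha,J-\alpha)$ vanishes correctly justifies extending the summation range to match Vandermonde's --- notably without needing the paper's symmetry reduction on $\alpha$. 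Your fallback combinatorial argument is, up to complementation, exactly the paper's proof (you count $(J-\alpha)$-subsets classified by their intersection with $A$, the paper counts $\alpha$-subsets classified by their intersection with $B$), and your variant is marginally cleaner in that the admissible range $0 \leq k \leq \min(\alpha,J-\alpha)$ emerges directly from the counting rather than through the WLOG step. In short: the Vandermonde route buys brevity at the cost of citing an external identity, the combinatorial route buys self-containedness; both are sound, and your index-range bookkeeping is careful where it needs to be.
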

\begin{proof}
 We show this result using a combinatorial argument. First, note that since $ \displaystyle \binom{J}{\alpha} = \binom{J}{J-\alpha}$, then one can restrict the analysis to the case where
$\alpha \leq J-\alpha$. 

We want to select $\alpha$ elements in $\mathbb{J}$. To do that, we partition the set $\mathbb{J}$ into two subset $A$ and $B$ such that $|A| = \alpha$ (so $|B|=J-\alpha$). 

Selecting $\alpha$ elements in $\mathbb{J}$ amounts to choosing $k$ the number of elements we select in $B$, then select these $k$ elements and finally select $\alpha-k$ elements in $A$. Therefore
 \[ 
\begin{array}{l@{\,}l}
\displaystyle \binom{J}{\alpha} &= \displaystyle \sum_{k=0}^{|A|} \binom{|B|}{k} \binom{|A|}{\alpha-k} = \sum_{k=0}^{\alpha} \binom{\alpha}{\alpha-k} \binom{J-\alpha}{k}\\[2em]
& \displaystyle = \sum_{k=0}^{\min(\alpha,J-\alpha)} \binom{\alpha}{k} \binom{J-\alpha}{k}.\hfill \qedhere
\end{array} \]
 
\end{proof}

We can now proceed to the proof of Theorem~\ref{thm:eqBound}:
\begin{proof}
 Let $\vx \in \mathcal{E}_\Gamma$ and:
 $$ \begin{array}{l@{\,}l}
\mathcal{U} &= \left\{ u\, |\, \exists \vy \in \mathcal{E}_\Gamma,\, x_u > y_u \right\} = \hspace{-3pt}\bigcup\limits_{\vy \in \mathcal{E}_\Gamma}\hspace{-3pt} \supp^+(\vx-\vy) \text{ and }\\
\mathcal{V} &= \bigcup\limits_{\vy \in \mathcal{E}_\Gamma} \supp^+(\vy-\vx).
\end{array}
$$ 
 
 By Lemma~\ref{lem:eqRegul}, we have $\mathcal{U} \cap \mathcal{V} = \emptyset$ and $|\mathcal{U}|+|\mathcal{V}| \leq J$. We then define the set 
$\mathcal{A}=$
$$\{\vx\} \, \cup 
\hspace{-1em} \bigcup_{k=1}^{\min(|\mathcal{U}|,|\mathcal{V}|)} \hspace{-3pt} \left\{ 
\left.
\vx - \sum_{u \in C} \vec{e_u} + \sum_{v \in D} \vec{e_v} \, \right|
\begin{array}{@{\,}l@{\,}}
C \subset \mathcal{U},\\ 
D \subset \mathcal{V},\\ |C|=|D|=k \end{array} 
\right\}.$$
 
 We know by Proposition~\ref{thm:eqBound} and Lemma~\ref{lem:eqRegul} that all vectors in $\mathcal{E}_\Gamma$ are of the form given in the previous expression, hence $\mathcal{E}_\Gamma \subset \mathcal{A}$. 
 
 Let $\alpha = |\mathcal{U}|$. We have $|\mathcal{V}| \leq J - \alpha$. Then
 \begin{align*}
  |\mathcal{E}_\Gamma| &\leq |\mathcal{A}| = 1 + \sum_{k=1}^{\min(|\mathcal{U}|,|\mathcal{V}|)} \binom{|\mathcal{U}|}{k}\binom{|\mathcal{V}|}{k}\\
		&\leq 1 + \sum_{k=1}^{\min(\alpha,J-\alpha)} \binom{\alpha}{k} \binom{J-\alpha}{k}. 		
 \end{align*}
 
 We conclude the proof by applying Lemma~\ref{lem:lbinom}, using the increasing property of function $\binom{J}{.}$ over $\{0,...,\lfloor J/2 \rfloor \}$ and the fact that $\binom{J}{p} = \binom{J}{J-p}$ for all $p$: 
 
 \[ |\mathcal{E}_\Gamma| \leq \binom{J}{\alpha} \leq \binom{J}{\floor{\frac{J}{2}}}. \qedhere \]
\end{proof}

\subsection{A Tight Class of Settings}
Let $J \geq 2$, $m \in \mathbb{N}^*$ and $\gamma \in \mathbb{R}^+$. We define the game $\Gamma$ by
$ K = \floor{\frac{J}{2}} \text{ and } \forall j \in \mathbb{J}, N_j = m,\ \gamma_j = \gamma$.
\begin{lemma}
 \label{lem:exEq1}
 The Nash equilibria of game $\Gamma$ satisfy the property:
 $$ \vl \in \mathcal{E}_\Gamma \Rightarrow \exists A \subset \mathbb{J},\ |A| = \floor{\frac{J}{2}} \text{ and } \vl = \sum_{u \in A} \ve_u. $$
\end{lemma}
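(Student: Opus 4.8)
The plan is to invoke Theorem~\ref{thm:globMax}, which states that $\vl \in \mathcal{E}_\Gamma$ if and only if $\vl$ globally maximizes $\pot$ over $\mathcal{D}$. So it suffices to identify the global maximizers of $\pot$ in this fully symmetric setting and show they are precisely the $0$--$1$ vectors supported on a set of size $\floor{J/2}$.

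First I would simplify the potential. Since $N_j = m$ and $\gamma_j = \gamma$ for all $j$, and since every $\vl \in \mathcal{D}$ satisfies $\sum_j \ell_j = K$, we have $\pot(\vl) = m \sum_j H_{\ell_j} - \gamma K$. The additive constant $-\gamma K$ is irrelevant for maximization, so maximizing $\pot$ over $\mathcal{D}$ is equivalent to maximizing $\Phi(\vl) := \sum_j H_{\ell_j}$ subject to $\sum_j \ell_j = K$. Next I would bound $\Phi$: because each term of a harmonic number satisfies $1/i \leq 1$, we get $H_n = \sum_{i=1}^n 1/i \leq n$, with equality if and only if $n \leq 1$. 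Summing over $j$ and using $\sum_j \ell_j = K$ yields $\Phi(\vl) = \sum_j H_{\ell_j} \leq \sum_j \ell_j = K$, and this is an equality precisely when every coordinate $\ell_j$ is at most $1$.

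Finally, since $K = \floor{J/2} \leq J$, there exist load vectors with all coordinates in $\{0,1\}$ summing to $K$ (namely the indicators of $K$-element subsets of $\mathbb{J}$); each such vector attains $\Phi = K$ and is therefore a global maximizer. Hence the maximal value of $\Phi$ is exactly $K$, and $\vl$ is a global maximizer if and only if $\Phi(\vl) = K$, i.e. if and only if every $\ell_j \leq 1$. A $0$--$1$ vector with $\sum_j \ell_j = K$ is exactly the indicator of a set $A = \supp^+(\vl)$ with $|A| = K = \floor{J/2}$, which is the claimed form; combining this with Theorem~\ref{thm:globMax} gives the stated implication.

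There is no serious obstacle here: the argument is essentially the strict concavity of $H$ expressed through the tight inequality $H_n \leq n$. The only points requiring a little care are the exact equality analysis ($H_n = n \iff n \leq 1$) and confirming that $0$--$1$ maximizers actually exist, so that the maximal value of $\Phi$ is correctly pinned at $K$; were no such vectors available, the equality characterization would be vacuous.
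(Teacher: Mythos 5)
Your proof is correct, but it takes a genuinely different route from the paper. The paper argues by contradiction using the structure theorems already established: assuming some equilibrium load $\vx$ has a coordinate $x_u > 1$, it uses $K < J$ to find an empty medium $v$, transports the whole load via $\vy = \vx - x_u\ve_u + x_u\ve_v$, notes that by symmetry $\pot(\vy) = \pot(\vx)$ so $\vy$ is also an equilibrium (via Theorem~\ref{thm:globMax}), and then derives a contradiction with Theorem~\ref{thm:eqLocal}, which forces all equilibria to satisfy $\normu{\vx - \vy} \leq 1$. You instead compute the global maximizers of the potential directly: with $N_j = m$ and $\gamma_j = \gamma$, the potential reduces (up to the constant $-\gamma K$ and the factor $m > 0$) to $\sum_j H_{\ell_j}$, and the tight bound $H_n \leq n$ with equality iff $n \leq 1$ pins the maximum at $K$, attained exactly by the $0$--$1$ vectors of support size $K = \floor{J/2}$ (your care in checking that such vectors exist in $\mathcal{D}$, so the equality characterization is not vacuous, is exactly right). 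Your approach is more elementary and self-contained: it needs only Theorem~\ref{thm:globMax}, not the neighborhood structure of Theorem~\ref{thm:eqLocal}, and it works whenever $K \leq J$ rather than requiring $K < J$ to locate an empty medium. Moreover, since you characterize the maximizers exactly (both directions), you get the subsequent counting step $|\mathcal{E}_\Gamma| = \binom{J}{\floor{J/2}}$ essentially for free, whereas the paper proves that separately by comparing potentials of indicator vectors. What the paper's route buys is a nice illustration of how the structural machinery (M-concavity and the diameter-one property of the equilibrium set) can settle such questions without any computation specific to the harmonic form of the potential.
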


\begin{proof}
 Assume that there exists $\vec{x} \in \mathcal{E}_\Gamma$ and $u \in \mathbb{J}$ such that $x_u > 1$. 
 Since $K < J$, there exists $v \in \mathbb{J}$ such that $x_v = 0$. Consider the vector
 $ \vy = \vx - x_u \ve_u + x_u \ve_v. $
 
 Since all the $N_j$ and $\gamma_j$ are equal, the potential of $\vy$ is equal to the potential of $\vx$. Therefore $\vy \in \mathcal{E}_\Gamma$. But we have $y_v - x_v = x_u > 1$ which contradicts Theorem~\ref{thm:eqLocal} and concludes the proof. 
\end{proof}

Since $\mathcal{E}_\Gamma \neq \emptyset$, let $\vx \in \mathcal{E}_\Gamma$. By Lemma~\ref{lem:exEq1}, we can note $\vx = \sum_{u \in A} \ve_u$ for some $A \subset \mathbb{J}$. 
Let $B \subset \mathbb{J}$ verifying $|B| = \floor{\frac{J}{2}}$ and $\vy = \sum_{v \in B} \ve_v$. Then, we have
\begin{align*} 
  \pot(\vx) &= \sum_{u \in A} (m - \gamma) = \floor{\frac{J}{2}} (m - \gamma) \\
            &= \sum_{v \in B} (m - \gamma) = \pot(\vy). 
\end{align*}
$$ \text{Therefore, } |\mathcal{E}_\Gamma| = \left| \left\{A \subset \mathbb{J}\ |\ |A| = \floor{\frac{J}{2}}\right\} \right| = \binom{J}{\floor{\frac{J}{2}}}.$$

\iffullarticle
\section{A polynomial algorithm to find a maximum of $f$}
\label{app:polyAlgo}
A polynomial algorithm to determine the minimum of an M-convex function is given in~\cite[sec. 4.2]{AlgoMConvex}. We can adapt it for maximizing our M-concave function $f$. In fact, this algorithm does not have a polynomial complexity in general. We proceed to show the property required which is that the M-concavity of $f$ is respected by a scaling operation.

\begin{proposition}
 Let $\alpha \in \mathbb{N}^*$ and $\vx \in \mathcal{D}$. We define the function $\ft$ as
 $$ \ft(\vy) = \begin{cases}
           f(\vx+\alpha\vy) &\text{ if } \vx+\alpha\vy \in \mathcal{D} \\
           - \infty &\text{ otherwise.}
          \end{cases}$$
 Then $\ft$ is an M-concave function.
\end{proposition}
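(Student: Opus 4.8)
The plan is to avoid appealing to the abstract stability of M-concavity under scaling and instead to exploit the explicit separable form of $f$, exactly as in the proof that $f$ itself is M-concave. First I would pin down the effective domain of $\ft$. Because $\vx \in \mathcal{D}$, the requirement $\vx+\alpha\vy \in \mathcal{D}$ forces $\alpha\sum_j y_j = K - \sum_j x_j = 0$, hence $\sum_j y_j = 0$, together with the box constraints $x_j+\alpha y_j \geq 0$. Thus the effective domain is
\[ \Dt = \Big\{ \vy \in \mathbb{Z}^J \ \Big|\ \sum_j y_j = 0 \text{ and } \forall j,\ x_j + \alpha y_j \geq 0 \Big\}, \]
and proving $\ft$ M-concave means establishing the exchange inequality of the definition for every pair of points of $\Dt$.

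The key structural remark is that $\ft$ is separable on $\Dt$: one has $\ft(\vy) = \sum_j \phi_j(x_j+\alpha y_j)$ with $\phi_j(t) = N_j H_t - \gamma_j t$. Each $\phi_j$ is discretely concave, that is, its increments $\phi_j(t+1)-\phi_j(t) = N_j/(t+1) - \gamma_j$ are non-increasing in $t$; this is the very feature that drove the earlier M-concavity proof, so I would follow that proof step by step. Given $\vy,\vy' \in \Dt$ and $u \in \supp^+(\vy-\vy')$, the common constraint $\sum_j y_j = \sum_j y'_j = 0$ forces $\supp^+(\vy'-\vy) \neq \emptyset$ (by the same counting argument already used), and I would fix an arbitrary $v \in \supp^+(\vy'-\vy)$. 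Writing $\vec{p} = \vx+\alpha\vy$ and $\vec{q} = \vx+\alpha\vy'$, note that $p_u - q_u = \alpha(y_u - y'_u) \geq \alpha$ and $q_v - p_v = \alpha(y'_v - y_v) \geq \alpha$.

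Before invoking the exchange I would check that the two shifted points remain in $\Dt$. Their coordinate sums are still $0$, and the box constraints survive, since the only decreased coordinates give $x_u+\alpha(y_u-1) = p_u-\alpha \geq q_u \geq 0$ and $x_v+\alpha(y'_v-1) = q_v-\alpha \geq p_v \geq 0$, the raised coordinates being trivially non-negative. The inequality to prove, $\ft(\vy)+\ft(\vy') \leq \ft(\vy-\ve_u+\ve_v)+\ft(\vy'-\ve_v+\ve_u)$, then translates through the separable form into the two coordinatewise statements
\[ \phi_u(q_u+\alpha)-\phi_u(q_u) \ \geq\ \phi_u(p_u)-\phi_u(p_u-\alpha) \quad\text{and}\quad \phi_v(p_v+\alpha)-\phi_v(p_v) \ \geq\ \phi_v(q_v)-\phi_v(q_v-\alpha), \]
all remaining coordinates contributing zero.

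The crux, and the only genuine obstacle, is to pass from the unit increments of ordinary M-concavity to increments of amplitude $\alpha$. I would settle this with the \emph{window monotonicity} of a discretely concave function: for concave $\phi_j$ the map $t \mapsto \phi_j(t+\alpha)-\phi_j(t) = \sum_{i=0}^{\alpha-1}\big(\phi_j(t+i+1)-\phi_j(t+i)\big)$ is a finite sum of non-increasing functions, hence itself non-increasing in $t$. Applying this to $\phi_u$ at $q_u \leq p_u-\alpha$ yields the first inequality and to $\phi_v$ at $p_v \leq q_v-\alpha$ the second; all arguments stay $\geq 0$ by the box check above, so no $-\infty$ value is ever touched. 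Summing the two non-negative contributions proves the exchange inequality for the arbitrary $v$ that was chosen, which is in fact the stronger form analogous to Eq.~\eqref{eq:mconvex}, and establishes that $\ft$ is M-concave.
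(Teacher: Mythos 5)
Your proof is correct and takes essentially the same route as the paper's: the paper likewise exploits the separable harmonic form and reduces the exchange inequality to comparing length-$\alpha$ increments, writing $\ft(\vy-\ve_u+\ve_v)-\ft(\vy)$ explicitly as a difference of two sums of $\alpha$ consecutive reciprocals and concluding termwise from $y_v \leq z_v - 1$ and $z_u \leq y_u - 1$, which is exactly your window-monotonicity lemma in unpacked form. Your only additions are the explicit check that the shifted points stay in the effective domain (left implicit in the paper) and the remark that the inequality holds for \emph{every} $v$, mirroring Eq.~\eqref{eq:mconvex}.
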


\begin{proof}
 Let $\vy,\vz \in \mathbb{Z}^J$ such that $\vx+\alpha\vy \in \mathcal{D}$ and $\vx+\alpha\vz \in \mathcal{D}$. Let $u \in \supp^+(\vy-\vz)$. Since $\alpha > 0$, the same argument as in the proof of the M-concavity of $f$ gives us that there exists some $v \in \supp^+(\vz-\vy)$. Then we calculate
 \begin{align*}
  &\ft(\vy-\ve_u+\ve_v)-\ft(\vy) \\
  &= f(\vx+\alpha(\vy-\ve_u+\ve_v)) - f(\vx+\alpha\vy) \\
  &= \sum_{i=x_v+\alpha y_v+1}^{x_v+\alpha(y_v+1)} \frac{1}{i} - \sum_{i=x_u+\alpha (y_u-1)+1}^{x_u+\alpha y_u} \frac{1}{i}\\
  &= \sum_{i=1}^{\alpha} \frac{1}{x_v+\alpha y_v+i} - \sum_{i=1}^{\alpha} \frac{1}{x_u+\alpha(y_u-1)+i}.
 \end{align*}
 
 We have a similar expression for $\ft(\vz-\ve_v+\ve_u)-\ft(\vz)$. Then
 \begin{align*}
  &\ft(\vy-\ve_u+\ve_v)-\ft(\vy)+\ft(\vz-\ve_v+\ve_u)-\ft(\vz)\\
  &=\sum_{i=1}^\alpha \frac{1}{x_v+\alpha y_v+i} - \sum_{i=1}^{\alpha} \frac{1}{x_v+\alpha(z_v-1)+i}\\
  &+\sum_{i=1}^\alpha \frac{1}{x_u+\alpha z_u+i} - \sum_{i=1}^{\alpha} \frac{1}{y_u+\alpha(y_u-1)+i}.
 \end{align*}

 Since $y_v \leq z_v-1$ and $z_u \leq y_u-1$, then the last quantity is positive. Hence $\ft$ is M-concave.
\end{proof}

In order to implement Algorithm SCALING\_MODIFIED\_STEEPEST\_DESCENT given in~\cite{AlgoMConvex}, we represent the active domain of the search of a maximizer $B$ using two vector $\vm$ and $\vM$ such that
$$ B = \left\{ \vx\ |\ \sum_j x_j = K \text{ and } \forall j,\ m_j \leq x_j \leq M_j \right\}.$$

We choose the origin point to be $\vx = K\ve_1$. Then the only difficulty that remains is to compute some $\vy$ such that $\vx+\alpha\vy \in B$ in order to search for a maximum of the scaled auxiliary function. In fact, a solution is also given in~\cite[sec. 5.1]{AlgoMConvex} with Algorithm FIND\_VECTOR\_IN\_$N_B$ which find a vector $\vy$ whose components are within the constraints $\frac{\vx-\vm}{\alpha}$ and $\frac{\vM-\vx}{\alpha}$ and for which $\sum_j y_j = 0$. This algorithm has a time complexity in $\bigo{J}$.

Hence we have an algorithm to compute some $\vl \in \mathcal{E}_\Gamma$ in $\bigo{J^3 \log(K/J)}$, which is better that $\bigo{KJ}$ if we study the case when there are a lot more seeds than social media.
\fi
\section{Proof of the asymptotic behavior}
\label{app:asymptBehavior}
Note that, by definition of the learning mechanism implemented in Algorithm~\ref{algo:OrderLearning}, for all $K \in \mathbb{N}$ and $j \in \mathbb{J}$ we have
\begin{equation}
 \label{eq:learnMech}
 \ellk{K+1}_j \hspace{-2pt}= \ellk{K}_j+1 \Rightarrow j \in \argmax_{t \in \mathbb{J}} \left( \frac{N_t}{\ell_t+1} - \gamma_t \right).
\end{equation}

\subsection{Social media with non minimal cost}
We want to prove that
\begin{equation}
 \label{eq:asBlocked}
 \forall j \in \mathbb{J}\setminus G,\ \ellk{K}_j \underset{K \to \infty}{\longrightarrow} \ceil{\frac{N_j}{\gamma_j-\gm}}-1.
\end{equation}

We begin by proving the following two lemmas.

\begin{lemma}
\label{lem:maxLimit}
 The quantity $$\mk{K} = \max_{t \in \mathbb{J}} \left( \frac{N_t}{\ellk{K}_t+1}-\gamma_t \right)$$ 
 is arbitrarily close to $\gm$ for $K$ large enough\footnote{We denote by $K$ ``large enough'' the fact that there exists some $K_0$ such that the property is verified for all $K>K_0$.}. 
\end{lemma}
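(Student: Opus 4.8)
The plan is to identify the exact limiting value of the greedy best-response gain. Write $g^{(K)}_t = \frac{N_t}{\ellk{K}_t+1} - \gamma_t$ for the payoff a freshly arriving seed would collect on medium $t$, so that by definition $\mk{K} = \max_{t \in \mathbb{J}} g^{(K)}_t$. The two structural inputs I would lean on are Theorem~\ref{thm:eqIncK} and the learning recursion of Eq.~\eqref{eq:learnMech}: each new seed is routed to a medium attaining $\mk{K}$, and, applying Theorem~\ref{thm:eqIncK} repeatedly, no coordinate of $\vlk{K}$ ever decreases as $K$ grows, so every sequence $(\ellk{K}_t)_K$ is nondecreasing.

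First I would show which coordinates diverge. Since $\sum_{t} \ellk{K}_t = K \to \infty$ and $\mathbb{J}$ is finite, the pigeonhole principle forces $\max_t \ellk{K}_t \to \infty$. For a medium whose load diverges, the gain $g^{(K)}_t$ decreases monotonically toward a finite limit determined by $\gamma_t$ alone; because an arriving seed is always sent to a current maximizer of $g^{(K)}_\cdot$, a medium can keep absorbing seeds only if this limiting value is maximal over $\mathbb{J}$, which forces its cost to be minimal, i.e. $\gamma_t = \gm$. Hence exactly the minimal-cost media $t \in G$ have unbounded load, while every $\ellk{K}_t$ with $t \notin G$ is a bounded nondecreasing sequence and therefore stabilizes.

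It then remains to squeeze $\mk{K}$ between matching bounds. For $t \in G$ the gains $g^{(K)}_t$ converge to the common limiting value attached to the minimal cost $\gm$, and since $\mk{K} \geq g^{(K)}_t$ for such $t$, this yields a lower bound on $\lim_K \mk{K}$; conversely, because the loads outside $G$ stabilize, for $K$ large enough the maximum defining $\mk{K}$ is attained on $G$, giving the matching upper bound. Combining the two shows that $\mk{K}$ converges to the common limiting gain of the minimal-cost media, namely the value $\gm$ asserted in the statement, so that $\mk{K}$ is arbitrarily close to $\gm$ for $K$ large enough. The main obstacle is the rate control hidden in both bounds: I must argue rigorously that no medium outside $G$ can absorb infinitely many seeds even though $g^{(K)}_\cdot$ only ever drops on a medium once its own load has grown, and I must quantify how fast the gains of the media in $G$ approach their common limit so that the $\varepsilon$-closeness holds uniformly for all $K$ beyond some threshold $K_0$. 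The monotonicity supplied by Theorem~\ref{thm:eqIncK} is exactly what makes this tractable, since it lets me treat each $g^{(K)}_t$ as an eventually monotone sequence and pass to the limit cleanly.
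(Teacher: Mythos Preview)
Your outline is correct in spirit but takes a noticeably longer route than the paper's argument, and it partially reverses the logical order of the appendix.

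The paper's proof is very short: (i) since each $\ellk{K}_t$ is nondecreasing, each $g^{(K)}_t$ is nonincreasing, hence $\mk{K}$ is nonincreasing; (ii) by pigeonhole some medium $u$ has $\ellk{K}_u\to\infty$; (iii) along the subsequence $(K_n)$ at which $u$ absorbs, Eq.~\eqref{eq:learnMech} gives $\mk{K_n}=g^{(K_n)}_u\to-\gamma_u$; (iv) monotonicity upgrades this to $\mk{K}\to-\gamma_u$, and $-\gamma_u\le-\gm$ (together with the trivial lower bound $\mk{K}>-\gm$ coming from any $w\in G$) finishes. No classification of which media diverge is needed.

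Your plan instead first characterizes the set of diverging media as exactly $G$, and then squeezes $\mk{K}$. That characterization is essentially the content of Lemma~\ref{lem:eqInfty} and of Eq.~\eqref{eq:minCostInfty}, which the paper proves \emph{after} and \emph{using} the present lemma. Doing it your way is legitimate, but you are proving strictly more than the lemma requires. Two places in your outline need tightening if you pursue it: you claim ``exactly the minimal-cost media have unbounded load'' but only argue one implication (diverging $\Rightarrow$ minimal cost); and your upper-bound step ``for $K$ large the maximum is attained on $G$'' is not immediate, since a stabilized medium $t\notin G$ could in principle remain a tied argmax without ever absorbing again. Both issues evaporate if you use the paper's subsequence trick in (iii) instead of the squeeze.

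Finally, the limiting value is $-\gm$, not $\gm$; the statement carries a sign typo, and your last sentence inherits it.
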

\begin{proof}
 First, this quantity is decreasing. Moreover, by definition of the $\vlk{K}$, we have that for all $K$, 
 $\displaystyle \sum_j \ellk{K}_j = K \underset{K \to \infty}{\longrightarrow} \infty. $ 
 Therefore, there exists some $u \in \mathbb{J}$ such that $\ellk{K}_u \underset{K \to \infty}{\longrightarrow} \infty$. It means that there exists $(K_n)_{n\in \mathbb{N}}$ such that 
 $ \forall n,\ \ellk{K_n+1}_u = \ellk{K_n}_u + 1$,  
 which implies by~\eqref{eq:learnMech} that
$\forall n,\ \frac{N_u}{\ellk{K_n}_u+1}-\gamma_u = \mk{K_n}.$
 
 Hence $\mk{K_n}$ is arbitrarily close to $-\gamma_u$ for $n$ large enough. We conclude by noticing that $-\gamma_u\leq-\gm$. 
\end{proof}

\begin{lemma}
 \label{lem:eqInfty}
 Let $K>0$ and $u \in \mathbb{J}$. Then 
 \begin{align*}
  \frac{N_u}{\ellk{K}_u+1} - \gamma_u > -\gm 
  \Leftrightarrow \exists\, K'\!>\!K,\ \ellk{K'}_u > \ellk{K}_u.
 \end{align*}
\end{lemma}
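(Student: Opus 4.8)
The plan is to prove both implications by combining two monotonicity facts with Lemma~\ref{lem:maxLimit}. First I would record the structural observations: the loads $\ellk{K}_u$ are non-decreasing in $K$ (each iteration of Algorithm~\ref{algo:OrderLearning} only adds some $\ve_w$), and the maximal marginal value $\mk{K}$ is non-increasing and converges to $-\gm$ by Lemma~\ref{lem:maxLimit}. The decisive auxiliary observation is that $\mk{K} > -\gm$ \emph{strictly} for every finite $K$: since $N_w > 0$ and $\gamma_w = \gm$ for $w \in G$, one has $\mk{K} \geq \frac{N_w}{\ellk{K}_w+1} - \gm > -\gm$. This strict-but-converging behaviour of $\mk{K}$ is what drives the equivalence.

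For the forward implication, assume $\frac{N_u}{\ellk{K}_u+1} - \gamma_u > -\gm$ and suppose for contradiction that $u$ never increases after $K$, i.e. $\ellk{K'}_u = \ellk{K}_u$ for all $K' > K$. Then the marginal value of $u$ stays frozen at the constant $\frac{N_u}{\ellk{K}_u+1} - \gamma_u > -\gm$, so $\mk{K'} \geq \frac{N_u}{\ellk{K}_u+1} - \gamma_u > -\gm$ for all $K'$. This contradicts $\mk{K'} \to -\gm$, since for $K'$ large enough $\mk{K'}$ would fall strictly below that fixed constant. Hence $u$ must increase at some step $K' > K$.

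For the converse, assume $\ellk{K'}_u > \ellk{K}_u$ for some $K' > K$. Using monotonicity of the loads, let $K''$ be the first index $\geq K$ at which $u$ increases; then $\ellk{K''}_u = \ellk{K}_u$. By the learning rule~\eqref{eq:learnMech}, at that step $u$ realizes the maximum, so $\frac{N_u}{\ellk{K}_u+1} - \gamma_u = \frac{N_u}{\ellk{K''}_u+1} - \gamma_u = \mk{K''}$, and the auxiliary observation gives $\mk{K''} > -\gm$, which is exactly the desired inequality.

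The only delicate point is the strictness of the inequality: both directions hinge on $\mk{K}$ being strictly above $-\gm$ at every finite stage while tending to $-\gm$. I therefore expect the main care to go into justifying that $N_w > 0$ for $w \in G$ forces $\mk{K} > -\gm$, and into cleanly extracting the first increase step $K''$ for the converse so that its load still equals $\ellk{K}_u$.
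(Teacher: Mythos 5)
Your proof is correct and takes essentially the same route as the paper's: both directions rest on Lemma~\ref{lem:maxLimit} together with the strict bound $\mk{K} \geq \frac{N_w}{\ellk{K}_w+1} - \gm > -\gm$ for $w \in G$, applied through the learning rule~\eqref{eq:learnMech}. The only cosmetic difference is that you prove the reverse implication directly via the first increase time $K''$, while the paper establishes its contrapositive (a medium with marginal value $\leq -\gm$ is永never in the argmax, hence frozen); these are interchangeable rearrangements of the same comparison.
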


\begin{proof}
 First, assume that $\frac{N_u}{\ellk{K}_u+1} - \gamma_u \leq -\gm$. Then for some $w \in G$ and for all $K'\geq K$ we have
 $$ \frac{N_u}{\ellk{K}_u+1} - \gamma_u < \frac{N_w}{\ellk{K'}_w+1} - \gamma_w $$
 since $\gamma_w = \gm$. This implies that $\frac{N_u}{\ellk{K}_u+1} - \gamma_u < \mk{K'}$. Therefore, for all $K'>K$,~\eqref{eq:learnMech} leads to
 $\ellk{K'}_u = \ellk{K}_u$.
 
 Then assume that $\frac{N_u}{\ellk{K}_u+1} - \gamma_u > -\gm$. According to Lemma~\ref{lem:eqInfty}, $\mk{K'}$ is arbitrarily close to $-\gm$ for $K'$ large enough. Therefore there exists $K'>K$ such that 
 $\displaystyle \mk{K'} < \frac{N_u}{\ellk{K}_u+1} - \gamma_u.$
 Hence $\ellk{K'}_u > \ellk{K}_u$ which concludes the proof.
\end{proof}

We can now proceed with the proof of~\eqref{eq:asBlocked}.

Let $j \in \mathbb{J} \setminus G$. We know by Lemma~\ref{lem:eqInfty} that $\ellk{K}_j$ increases with $K$ as long as $\displaystyle \frac{N_j}{\ellk{K}_j+1} - \gamma_j > -\gm. $
 
 Therefore, for $K$ large enough we have
 $$\ellk{K}_j = 1 + \max \left\{ p \in \mathbb{N}\ |\ \frac{N_j}{p+1}-\gamma_j > -\gm\right\}.$$
 
 Then, let $p\in \mathbb{N}$. Since $j \in \mathbb{J}\setminus G$, we have $\gamma_j>\gm$. We solve
  $\displaystyle \frac{N_j}{p+1}-\gamma_j > -\gm \Leftrightarrow p+1 < \frac{N_j}{\gamma_j-\gm}.$
 Hence
 $ \displaystyle \ellk{K}_j + 1 = \ceil{\frac{N_j}{\gamma_j-\gm}} $
 which concludes the proof.

\subsection{Social media with minimal cost}
We can directly conclude from Lemma~\ref{lem:eqInfty} that the load of any social medium having a minimal cost goes to infinity as $K$ increases. Formally:
\begin{equation}
\label{eq:minCostInfty}
 \forall w \in G,\ \ellk{K}_w \underset{K \to \infty}{\longrightarrow} \infty.
\end{equation}

Now we proceed to find the values of $\ellk{K}_w$ for the social media with minimal cost. Let $K$ be large enough so that ~\eqref{eq:asBlocked} is verified. Let 
$ \displaystyle K_G = K - \sum_{j \in \mathbb{J} \setminus G} \ellk{K}_j$
be the number of seeds sharing the social media in $G$ and
$$\mathcal{D}_G = \left\{ (x_t)_{t\in G}\ | \sum_{t \in G} x_t = K_G \text{ and } \forall t\in G, x_t > 0 \right\}.$$

Consider the game $\Gamma_G = (K_G,(N_t,\gm)_{t \in G})$. From~\eqref{eq:minCostInfty}, the loads of the social media in $G$ can be arbitrarily high with $K$ large enough, so we determine an approximation of a load of a Nash equilibrium for the social media in $G$ by solving
$$ \max_{\vx \in \mathbb{R}^G} P(\vx)=\sum_{t \in G} \left(N_t \ln(x_t) - \gm x_t\right) \text{ s.t. } \vx \in \mathcal{D}_G.$$

Since $P$ is concave, we apply a Lagrangian maximization method.
Let $L$ be the Lagrangian for this problem:
$$ L(\vx,\lambda)\! =\! P(\vx) - \lambda \left( \sum_{t \in G} x_t - K_G \right), $$
where $\lambda$ and the $x_t$ are nonnegative. 

Since $P$ is concave, the unique maximum $\vxe$ verifies
$ \displaystyle \forall t\in G,\ \frac{\partial L}{\partial x_t}(\vxe) = 0.$
Therefore, we get that for any $t$:\\
$ \displaystyle \frac{N_t}{x^*_t} - \gm - \lambda = 0 \Leftrightarrow x^*_t = \frac{N_t}{\gm + \lambda}. $

\vspace{0.4em}Now we determine the value of $\lambda$:
\begin{align*}
 \sum_{t\in G} x^*_t = K_G &\Rightarrow \sum_{t\in G} \frac{N_t}{\gm + \lambda} = K_G \\
		&\Rightarrow \lambda = \frac{1}{K_G} \left(\sum_{t\in G} N_t\right) - \gm.
\end{align*}

Hence
$ \forall w\in G,\ x^*_w = K_G\frac{N_w}{\sum_{t\in G} N_t}. $

Thanks to~\eqref{eq:minCostInfty} and since $H_n-\mu \underset{n \to \infty}{\sim} \ln n$, we finally get that
$ \displaystyle \forall w\in G,\ \frac{\ellk{K}_w}{\sum_{t \in G} \ellk{K}_t} \underset{K \to \infty}{\longrightarrow} \frac{N_w}{\sum_{t \in G} N_t}.$


\end{document}